\title{Hankel determinant solutions to several discrete integrable systems and the Laurent Property}
\author{Xiang-Ke Chang\footnotemark[1] \footnotemark[2],
  Xing-Biao Hu\footnotemark[1], Guoce Xin\footnotemark[3]}
\begin{document}
\maketitle

\renewcommand{\thefootnote}{\fnsymbol{footnote}}
\footnotetext{Emails: changxk@lsec.cc.ac.cn; hxb@lsec.cc.ac.cn; guoce.xin@gmail.com}
\footnotetext[1]{LSEC, Institute of Computational Mathematics and
  Scientific Engineering Computing, AMSS, Chinese Academy of Sciences,
  P.O.Box 2719, Beijing 100190, PR China.}
\footnotetext[2]{University of Chinese Academy of Sciences,\
  Beijing, PR China.}
\footnotetext[3]{Department of mathematics, Capital Normal University,
Beijing 100048, PR China. }

\begin{abstract}
Many discrete integrable systems exhibit the Laurent phenomenon. In this paper, we investigate three integrable systems: the Somos-4 recurrence, the Somos-5 recurrence and a system  related to so-called $A_1$ $Q$-system, whose general solutions are derived in terms of Hankel determinant. As a result, we directly confirm that they satisfy the Laurent property. Additionally, it is shown that the Somos-5 recurrence can be viewed as a specified B\"{a}cklund transformation of the Somos-4 recurrence. The related topics about Somos polynomials are also studied.
\end{abstract}

\begin{keywords}
Hankel determinant solution, Laurent property, discrete integrable systems
\end{keywords}

\begin{AMS}
11B83, 37J35, 15B05, 11Y65
\end{AMS}

\pagestyle{myheadings}
\thispagestyle{plain}
\markboth{TEX PRODUCTION}{Determinant Solutions and the Laurent Property}


\section{Introduction}
Laurent phenomenon is a crucial property behind integrality shared by a class of combinatorial models while integrability is a key feature for a class of what we call integrable systems. An interesting observation is that many discrete integrable systems exhibit the Laurent phenomenon, and many mappings with the Laurent property are proved to be integrable. In the recent decade, a lot of related work were done via the study of a class of commutative algebras, called cluster algebras \cite{fomin2002cluster,fomin2007cluster}, for its success in proving the Laurent phenomenon. For instance, see \cite{di2010solution,di2011discrete,di2013t,fomin2002laurent,fordy2011mutation,fordy2011symplectic,fordy2012discrete,fordy2011cluster,gekhtman2011generalized,hone2006diophantine,hone2007singularity}.
More recently, a generalization of cluster algebras called Laurent phenomenon algebras \cite{lam2012laurent} was also introduced in order to study the Laurent phenomenon \cite{alman2013laurent,hone2014family}.

One question is whether we can prove the Laurent property (or a stronger property) from other views. In this paper, we shall give one possible choice---the view of explicit determinant solution. That is, the Laurent property of some discrete integrable systems can be proved by their determinant solutions. Of course, how to obtain the desired determinant formulae is another challenging problem. This approach succeeds for the following three integrable systems:

The first two systems are the Somos-4 recurrence and the Somos-5 recurrence defined by
\begin{equation}\label{somos4}
S_{n}S_{n-4}=\alpha S_{n-1}S_{n-3}+\beta S_{n-2}^2,
\end{equation}
and
\begin{equation}\label{somos5}
S_{n}S_{n-5}=\tilde{\alpha} S_{n-1}S_{n-4}+\tilde{\beta} S_{n-2}S_{n-3},
\end{equation}
with nonzero parameters $\alpha, \beta, \tilde{\alpha}, \tilde{\beta}$ and arbitrary initial values, respectively. Here we assume $S_i$ never vanish. By using their theory of cluster algebras \cite{fomin2002cluster,fomin2007cluster}, Fomin and Zelevinsky \cite{fomin2002laurent,hone2007laurent,hone2008integrality} proved that the Somos-4 and Somos-5 recurrences exhibit the Laurent property, that is, $S_n$ are Laurent polynomials in its initial values with coefficients in $\mathbb{Z}[\alpha, \beta]$ and $\mathbb{Z}[\tilde\alpha,\tilde\beta]$, respectively. Hone constructed explicit solutions to the Somos-4 and Somos-5 recurrences in terms of the Weierstrass sigma function in \cite{hone2005elliptic} and \cite{hone2007sigma}, respectively. He also indicated that the Somos-4 recurrence can be thought of as an integrable symplectic mapping \cite{bruschi1991integrable,veselov1991integrable} by making a change of variables and Somos-5 be a integrable Poisson mapping. Unfortunately, it is not obvious to see the Laurent phenomenon from their Weierstrass sigma function formulae.

Our result for the Somos-4 recurrence with initial values $1,1,x,y$ is as follows.
\begin{theorem}\label{theorem_somos4}
  If we let $S_{-1}=1,S_0=1$,$S_1=x,S_2=y,S_3=\alpha y+\beta x^2$ by shifting the indices, then the Somos-4 recurrence \eqref{somos4} has the following explicit Hankel determinant representation:
\[
S_n=\det(p_{i+j})_{0\leq i,j\leq n-1},
\]
where $p_m=\frac{\beta x^2-y^2+\alpha x^3-\alpha y}{\sqrt\alpha xy}p_{m-1}+\frac{\beta+\alpha x-xy}{y}p_{m-2}+\sum_{k=0}^{m-2}p_kp_{m-2-k}(m\geq2)$ with initial values $p_0=x,p_1=-\sqrt\alpha.$
  \end{theorem}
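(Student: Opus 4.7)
The plan is to prove the Hankel determinant formula by extracting an algebraic generating function for $(p_m)$ and then combining classical Hankel determinant identities to recover the Somos-4 recurrence.

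First, I would encode the recurrence for $p_m$ as a functional equation for the generating function $P(z) = \sum_{m\ge 0} p_m z^m$. Since the convolution $\sum_{k=0}^{m-2} p_k p_{m-2-k}$ is the coefficient of $z^m$ in $z^2 P(z)^2$ and the remaining terms are linear in $P$, the recurrence becomes a quadratic equation
\[
z^2 P(z)^2 + L(z)\,P(z) + Q(z) = 0,
\]
where $L,Q$ are low-degree polynomials in $z$ with coefficients explicit in $x,y,\alpha,\beta,\sqrt\alpha$; taking the branch analytic at $z=0$ reproduces $(p_m)$.

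Second, from this quadratic equation I would extract a Jacobi continued fraction (J-fraction) expansion of $P(z)$. Because $P$ is algebraic of degree two, its J-fraction coefficients $\{c_n,\lambda_n\}$ admit closed-form expressions in $x,y,\alpha,\beta$. The classical Heine--Stieltjes theorem then gives a product formula for the Hankel determinants $H_n := \det(p_{i+j})_{0\le i,j\le n-1}$ in terms of the $\lambda_k$'s, and analogous product formulas for the shifted Hankel determinants $H_n^{(k)} := \det(p_{i+j+k})_{0\le i,j\le n-1}$. Applying the Desnanot--Jacobi (Lewis Carroll) identity to $(p_{i+j})_{0\le i,j\le n}$ yields the three-term bilinear relation
\[
H_{n+1}\,H^{(2)}_{n-1} = H_n\,H^{(2)}_n - \bigl(H^{(1)}_n\bigr)^2,
\]
and using the product formulas to eliminate $H^{(1)}_n, H^{(2)}_n$ in favor of $H_{n-k}$ should yield a closed four-term bilinear relation purely in $H_{n-4},\dots,H_n$, which I expect to be exactly the Somos-4 recurrence \eqref{somos4}.

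Finally, I would verify the base cases: $H_{-1}=H_0=1$ trivially, $H_1=p_0=x$, and a short direct computation from the $p_m$-recurrence gives $p_2=(y+\alpha)/x$, whence $H_2=p_0 p_2-p_1^2 = y$; the Somos-4 relation then forces $H_3=\alpha y+\beta x^2 = S_3$, and induction pins down $H_n=S_n$ for all $n\ge -1$. The main obstacle will be the coefficient-matching in the bilinear step: the exotic linear coefficient $\frac{\beta x^2-y^2+\alpha x^3-\alpha y}{\sqrt\alpha\,xy}$ in the $p_m$-recurrence and the $\sqrt\alpha$ appearing in $p_1$ are clearly engineered so that the derived bilinear identity has coefficients \emph{exactly} $\alpha$ and $\beta$ rather than some other algebraic expressions, and unpacking this engineering—tracking the implicit gauge transformation hidden in $\sqrt\alpha$ through the J-fraction computation—will be the technical heart of the argument.
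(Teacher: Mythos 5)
Your opening step coincides with the paper's: the convolution recurrence for $p_m$ is equivalent to a quadratic functional equation for $P(t)$, which the paper writes in the form $P(t)=\frac{a_0+b_0t}{1+c_0t+d_0t^2+t^2(e_0+f_0t)P(t)}$ with explicit $a_0,\dots,f_0$. After that, however, your plan has a genuine gap at the J-fraction step. It is not true that a power series that is algebraic of degree two has J-fraction coefficients in closed form, and for this particular $P$ the Heine--Stieltjes product formula forces $\lambda_n=H_{n+1}H_{n-1}/H_n^2$, i.e.\ the $\lambda_n$ \emph{are} the quantities $S_{n+1}S_{n-1}/S_n^2$ that the theorem is about. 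Producing them "in closed form in $x,y,\alpha,\beta$" is therefore equivalent to the statement being proved, not an input to it: they satisfy the nonlinear map $\lambda_{n+1}\lambda_n\lambda_{n-1}=\alpha+\beta/\lambda_n$ (compare \eqref{a_somos4}), whose orbit encodes a generically non-torsion point on an elliptic curve, so no uniform elementary formula exists. Likewise the shifted determinants $H_n^{(1)},H_n^{(2)}$ carry no product formulas comparable to the one for $H_n$; relating them to the unshifted $H_m$ requires separate identities extracted from the specific $p_m$-recurrence by row and column operations (this is precisely what the paper's classic-determinant proof of Theorem \ref{theorem_A1Q} does, and what \cite{chang2012conjecture} does for a special Somos-4 case). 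As written, your Desnanot--Jacobi step has no engine to run on, and the phrases "should yield" and "I expect" sit exactly where the missing argument belongs.

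The paper closes this gap with two tools for which you would need substitutes: Lemma \ref{lemma_xin} (Sulanke--Xin), which converts $H_n(P)$ into the product $a_0^na_1^{n-1}\cdots a_{n-1}$ with the $a_n$ generated by an explicit first-order recursion in six parameters, and, crucially, Lemma \ref{theorem_xin}, which exhibits an invariant of that recursion and hence the linearized relation \eqref{a_id}. With \eqref{a_id} in hand, the Somos-4 recurrence reduces to the identity $T(n)=0$ of \eqref{a_id2}, proved by the one-line induction $T(n)=T(n-1)$ together with the check $T(2)=0$. Your base-case computations are correct ($p_2=(y+\alpha)/x$, hence $H_2=y$), and you correctly sense that the engineered coefficients and the $\sqrt\alpha$ are where the content lies; but without an analogue of Lemma \ref{theorem_xin} --- some a priori identity satisfied by the continued-fraction data --- the proposal does not constitute a proof.
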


The Laurent phenomenon clearly holds for initial values $1,1,x,y$ by the theorem: i) the determinant formula asserts that $S_n$ is a Laurent polynomial in $\sqrt{\alpha}, x,y$ and polynomial in $\beta$; ii) the recursion \eqref{somos4} shows that $S_n$ is rational in $\alpha$ and that $S_n$ exists and not equal to $0$ when $\alpha=0$, which can be checked by induction.
To see that the Laurent phenomenon holds for arbitrary initial values, we observe that the recursion \eqref{somos4} holds for $S_n$ if and only if it holds for $S_n/c$ for constant $c\ne 0$, if and only if it holds for $S_n t^n$ for constant $t\ne 0$. This makes it sufficient to consider the sequence $\{\frac{S_n}{S_0}(\frac{S_0}{S_1})^n \}_{n\ge 0}$, which starts with $1,1$. Note that the reduction to the case $1,1,x,y$ is a simple but important step, since
the Somos-4 recurrence with arbitrary initial values does not seem to have the desired Hankel determinant representation.

Similarly, it is sufficient to give the following result for the Somos-5 recurrence.
\begin{theorem}\label{theorem_somos5}
 If we let $S_{-2}=1, S_{-1}=1, S_0=x$,$S_1=y,S_2=z,S_3=\tilde{\alpha}z+\tilde{\beta}xy,$
 $S_4=\tilde{\alpha}^2xz+\tilde{\alpha}\tilde{\beta}x^2y+\tilde{\beta}yz$ by shifting the indices, then the Somos-5 recurrence \eqref{somos5} has the following explicit Hankel determinant representation:

\begin{eqnarray}
S_{2n}&=&x^{n+1}\det(p_{i+j})_{0\leq i,j\leq n-1},\\
S_{2n+1}&=&y^{n+1}\det(q_{i+j})_{0\leq i,j\leq n-1},
\end{eqnarray}
where $p_m$ is recursively given by $p_0=\frac{z}{x^2},p_1=-\tilde{\beta}$,
 \begin{multline*}
p_m=-\frac{-\tilde{\alpha}x^2z+\tilde{\alpha}x^2y^2+\tilde{\beta}x^3y+zy^2-z^2}{xyz}p_{m-1}\\
+\frac{yx^4\tilde{\alpha}^2+x^5\tilde{\alpha}\tilde{\beta}+\tilde{\beta}zx^3-z^2y}{x^2yz}p_{m-2}+
\sum_{k=0}^{m-2}p_kp_{m-2-k}(m\geq2),
\end{multline*}
and $q_m$ is recursively given by $q_0=\frac{\tilde{\alpha}z+\tilde{\beta} xy}{y^2},q_1=-\tilde{\beta}$, \begin{multline*}
q_m=\frac{\tilde{\alpha}x^2z+\tilde{\alpha}x^2y^2-z^2-zy^2+\tilde{\beta}x^3y}{xyz}q_{m-1}\\
+\frac{\tilde{\alpha}y^4-\tilde{\alpha}z^2+\tilde{\beta}xy^3-\tilde{\beta}xzy}{y^2z}q_{m-2}+\sum_{k=0}^{m-2}q_kq_{m-2-k}(m\geq2).
\end{multline*}
  \end{theorem}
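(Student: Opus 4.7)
The plan is to parallel the proof of the Somos-4 case in Theorem~\ref{theorem_somos4}, with one essential new ingredient: because the shift in the Somos-5 recurrence is odd, the even and odd subsequences $\{S_{2n}\}$ and $\{S_{2n+1}\}$ must be treated by separate Hankel determinants, built respectively from $\{p_m\}$ and $\{q_m\}$.

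First, I would package the recurrence for $p_m$ as a functional equation for the generating function $P(t) = \sum_{m\geq 0} p_m t^m$. The convolution term $\sum_{k=0}^{m-2} p_k p_{m-2-k}$ is exactly the coefficient of $t^m$ in $t^2 P(t)^2$, while the remaining two terms contribute linearly in $P(t)$ with polynomial coefficients in $t$. Summing the recurrence over $m \ge 2$ then shows that $P(t)$ satisfies a quadratic algebraic equation
\[
t^2 P(t)^2 + A(t) P(t) + B(t) = 0,
\]
where $A(t), B(t)$ are polynomials in $t$ of low degree with coefficients rational in $\tilde\alpha, \tilde\beta, x, y, z$; the analogous statement holds for $Q(t)$.

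Second, for an algebraic generating function of degree two the Hankel determinants of the coefficients are accessible through the Jacobi continued fraction expansion: if $P(t)$ admits $J$-fraction data $(c_0;\alpha_0,\alpha_1,\dots;\beta_0,\beta_1,\dots)$, then $\det(p_{i+j})_{0\le i,j\le n-1}$ factors as a known product of the $\alpha_k$'s and $c_0$. From the explicit quadratic of Step~1 one reads off these parameters for $P(t)$, and similarly for $Q(t)$. Armed with the resulting closed forms and the prefactors $x^{n+1}, y^{n+1}$, the proof reduces to verifying the Somos-5 bilinear identity
\[
S_n S_{n-5} = \tilde\alpha S_{n-1} S_{n-4} + \tilde\beta S_{n-2} S_{n-3}.
\]
Since this identity couples even and odd subscripts, its verification demands a precise relationship between $P$ and $Q$, most naturally obtained by exhibiting both as transforms of a common underlying sequence, or by finding a Desnanot--Jacobi identity that mixes the two Hankel matrices. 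The base cases $S_{-2},S_{-1},\ldots,S_4$ are then matched against the small Hankel determinants by direct computation.

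I expect the principal obstacle to lie in that last step: forging the precise connection between $P$ and $Q$ so that the cross-coupling between the even and odd Hankel determinants produces the Somos-5 identity with exactly the coefficients $\tilde\alpha$ and $\tilde\beta$. The bookkeeping between the $J$-fraction data and the parameters $x,y,z$ is delicate, and correctly aligning the asymmetric roles of $\tilde\alpha$ and $\tilde\beta$ -- in particular tracking how the mixed term $\tilde\beta S_{n-2}S_{n-3}$ arises from a product of one even and one odd Hankel determinant -- is where the bulk of the technical work will lie.
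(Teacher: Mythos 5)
Your plan stalls at precisely the point you flag as the ``principal obstacle,'' and that obstacle is not a matter of delicate bookkeeping --- it is the entire content of the theorem's proof, and your proposal offers no concrete mechanism for it. You propose to build the two Hankel determinant families from $P(t)$ and $Q(t)$ separately and then verify the Somos-5 identity $S_nS_{n-5}=\tilde\alpha S_{n-1}S_{n-4}+\tilde\beta S_{n-2}S_{n-3}$ directly, which forces you to find a cross-identity mixing the $p$-Hankel and $q$-Hankel matrices; you name two candidate devices (a common underlying sequence, or a mixed Desnanot--Jacobi identity) but exhibit neither, and neither is readily available. The paper avoids this coupling entirely by a different key observation: writing $f_n=S_{2n}$, $g_n=S_{2n+1}$, the Somos-5 recurrence becomes the pair of coupled bilinear equations $g_{n+1}f_{n-1}=\tilde\alpha f_{n+1}g_{n-1}+\tilde\beta g_nf_n$ and $f_{n+2}g_{n-1}=\tilde\alpha g_{n+1}f_n+\tilde\beta f_{n+1}g_n$, which are exactly the bilinear B\"acklund transformation of the Somos-4 recurrence (Lemma~\ref{lemma_somosbt}) with $\mu=\tilde\beta$, $\lambda=\tilde\alpha$, $\eta=\tilde\alpha\tilde\beta$, $\alpha=\tilde\beta^2$. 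Consequently each parity subsequence \emph{separately} satisfies a Somos-4 recurrence with $\alpha=\tilde\beta^2$ and a $\beta$ computable from the first five terms (this is Hone's observation, rederived here via the B\"acklund transformation), and Theorem~\ref{theorem_somos4} can then be applied twice, once to the normalized even subsequence and once to the odd one, yielding the two stated Hankel formulas with no interaction between $P$ and $Q$ ever required.

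Your Steps 1 and 2 (quadratic functional equations for $P$ and $Q$, then continued-fraction evaluation of the Hankel determinants) are essentially the Sulanke--Xin machinery the paper uses for the Somos-4 case, so that part is sound in principle; but without the decoupling observation the argument cannot be completed as you describe it. If you want to salvage your direction of proof, the missing lemma you need is precisely that $\{S_{2n}\}$ and $\{S_{2n+1}\}$ are each Somos-4 sequences; once you have that, the Somos-5 relation for the interleaved sequence is what you start from (it holds by hypothesis for the sequence determined by the given initial data), and the task reduces to the already-proved Somos-4 Hankel representation rather than to a new mixed determinant identity.
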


 The third system is called the extended $A_1$ $Q$-system:
\begin{equation}\label{new_recurrence}
S_{n}S_{n-2}=S_{n-1}^2+\beta,
\end{equation}
with nonzero parameters $\beta$ and arbitrary initial values. Here we assume $S_i$ never vanish. The special case when $\beta=1$ and all the initial values are 1s reduces to the $A_1$ $Q$-system \cite{di2011discrete,di2010qsystem,di2011non}. From the view of integrable systems, \eqref{new_recurrence} is C-integrable\cite{calogero1991certain}.  Note that the Laurent property of \eqref{new_recurrence} was shown in \cite{di2011discrete,fomin2002laurent,hone2007singularity}.

By dividing \eqref{new_recurrence} by $S_0^2$, we may assume the initial value is $S_0=1, S_1=x$ ($\beta$ changes but do not affect the Laurent property). The determinant solution is as follows, from which the Laurent property is obviously satisfied.
\begin{theorem} \label{theorem_A1Q}
  If we let $S_0=1$,$S_1=x$, then the extended $A_1$ $Q$-system \eqref{new_recurrence} has the following explicit Hankel determinant representation:
\[
S_n=\det(p_{i+j})_{0\leq i,j\leq n-1},
\]
where $p_m=\frac{\beta+1-x^2}{x}p_{m-1}+\sum_{k=0}^{m-1}p_kp_{m-1-k}(m\geq2)$ with initial values $p_0=x,p_1=1.$
  \end{theorem}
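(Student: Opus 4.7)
The plan is to set $T_n := \det(p_{i+j})_{0\le i,j\le n-1}$ (with $T_0=1$) and prove $T_n=S_n$ by matching initial data---$T_0=1=S_0$ and $T_1=p_0=x=S_1$---together with the bilinear recurrence $T_n T_{n-2}=T_{n-1}^2+\beta$. First I would encode the scalar recurrence for $(p_m)$ into the generating function $P(t):=\sum_{m\ge 0}p_m t^m$. Since $\sum_{k=0}^{m-1}p_k p_{m-1-k}$ is the $t^{m-1}$-coefficient of $P(t)^2$, summing and collecting terms delivers the quadratic functional equation
\[
tP(t)^2+(ct-1)P(t)+x-\beta t=0,\qquad c=\frac{\beta+1-x^2}{x}.
\]
Substituting $P(t)=x+tP^{(1)}(t)$ and exploiting the crucial simplification $x^2+cx-\beta=1$ collapses this to the self-similar relation
\[
P^{(1)}(t)\bigl(1-(2x+c)t-t^2 P^{(1)}(t)\bigr)=1,
\]
which iterates into the Jacobi continued-fraction expansion of $P^{(1)}$ with constant coefficients $b_i\equiv p_2:=2x+c=(\beta+1+x^2)/x$ and $\lambda_i\equiv 1$.

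Two classical consequences follow. First, the Hankel--determinant formula for $J$-fractions, $\det(\mu_{i+j})_{0\le i,j\le n-1}=\mu_0^{\,n}\prod_{k=1}^{n-1}\lambda_k^{\,n-k}$, gives $\det(p_{i+j+1})_{0\le i,j\le n-1}=1$ for every $n\ge 1$. Second, a standard identity for constant-coefficient $J$-fractions---equivalent to the fact that the associated monic orthogonal polynomials are Chebyshev of the second kind, $\phi^{(1)}_n(z)=U_n((z-p_2)/2)$---shows that the shifted Hankel determinants
\[
H^{(2)}_n:=\det(p_{i+j+2})_{0\le i,j\le n-1}=U_n(p_2/2)
\]
satisfy the three-term recurrence $H^{(2)}_n=p_2 H^{(2)}_{n-1}-H^{(2)}_{n-2}$, with $H^{(2)}_0=1$ and $H^{(2)}_1=p_2$.

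Next I would apply the Desnanot--Jacobi identity to the $(n+1)\times(n+1)$ Hankel matrix of $(p_m)$. Writing $H^{(1)}_n:=\det(p_{i+j+1})_{0\le i,j\le n-1}=1$ from the preceding step, this reads
\[
T_{n+1}\,H^{(2)}_{n-1}=T_n\,H^{(2)}_n-(H^{(1)}_n)^2=T_n H^{(2)}_n-1.
\]
Subtracting two consecutive instances produces $(T_{n+1}+T_{n-1})H^{(2)}_{n-1}=T_n(H^{(2)}_n+H^{(2)}_{n-2})$, and the Chebyshev identity $H^{(2)}_n+H^{(2)}_{n-2}=p_2 H^{(2)}_{n-1}$ cancels the common factor $H^{(2)}_{n-1}$, leaving the linear three-term recurrence $T_{n+1}=p_2 T_n-T_{n-1}$. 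A one-line computation using this recurrence twice shows that $T_{n+1}T_{n-1}-T_n^2$ is independent of $n$; evaluating at $n=1$ pins it to $T_2 T_0-T_1^2=(x^2+\beta)-x^2=\beta$, which is exactly \eqref{new_recurrence}.

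The principal obstacle is the Chebyshev step in the second paragraph: the basic Hankel--$J$-fraction correspondence only directly delivers the \emph{unshifted} determinants of $(p^{(1)}_m)$ as constants, and recognizing the \emph{shifted} determinants $H^{(2)}_n$ as Chebyshev values $U_n(p_2/2)$ requires going one step deeper---either via the semicircle moment representation of $(p^{(1)}_m)$ on $[p_2-2,p_2+2]$, or by an independent Desnanot--Jacobi application to the Hankel matrix of $(p^{(1)}_m)$ combined with induction on $n$. Once that Chebyshev structure is in hand, the remaining bilinear manipulations are routine.
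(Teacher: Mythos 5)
Your proposal is correct, but it reaches the bilinear identity by a genuinely different route than the paper. The paper works entirely with elementary row/column operations: using the convolution recurrence for $p_m$ it eliminates columns to prove $H_n^{(1)}=1$ and the auxiliary relation $H_n^{(0)}=xH_{n-1}^{(0)}+\beta H_{n-2}^{(2)}$, and then a single application of the Jacobi (Desnanot--Jacobi) identity $H_n^{(0)}H_{n-2}^{(2)}=H_{n-1}^{(0)}H_{n-1}^{(2)}-(H_{n-1}^{(1)})^2$ immediately yields $H_{n+1}^{(0)}H_{n-1}^{(0)}=(H_n^{(0)})^2+\beta$. You instead pass to the constant-coefficient $J$-fraction for $P^{(1)}$, obtain $H_n^{(1)}=1$ from the classical Hankel/$J$-fraction product formula, identify the shifted determinants $H_n^{(2)}$ with Chebyshev values, and extract the linear three-term recurrence $T_{n+1}=\frac{x^2+\beta+1}{x}T_n-T_{n-1}$, of which the bilinear relation is the conserved quantity; amusingly, the paper proves that very three-term recurrence as a \emph{separate} subsequent theorem (via $H_n^{(0)}=xH_{n-1}^{(2)}-H_{n-2}^{(2)}$), so your argument in effect proves the stronger statement first. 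Your approach buys a conceptual explanation (orthogonal polynomials, Chebyshev structure) at the cost of importing classical machinery; the paper's buys self-containedness at the cost of some opaque determinant manipulation. The one step you flag as open --- that $H_n^{(2)}=\det(p_{i+j+2})$ satisfies $H_n^{(2)}=p_2H_{n-1}^{(2)}-H_{n-2}^{(2)}$ --- does close cleanly: for any moment sequence $(\mu_m)$ with monic orthogonal polynomials $\phi_n$ one has $\det(\mu_{i+j+1})_{0\le i,j\le n-1}=(-1)^n\phi_n(0)\det(\mu_{i+j})_{0\le i,j\le n-1}$, and applying this to $\mu_m=p_{m+1}$ (whose $\phi_n$ obey $\phi_{n+1}(z)=(z-p_2)\phi_n(z)-\phi_{n-1}(z)$ and whose unshifted determinants are all $1$) gives the recurrence at $z=0$. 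Also note that your cancellation of the common factor $H_{n-1}^{(2)}$ is legitimate because it equals a nonzero Chebyshev polynomial in $(x^2+\beta+1)/x$, hence is nonzero in the field of rational functions in $x,\beta$ where the identity lives.
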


All three theorems can be proved by employing Sulanke and Xin's method of quadratic transformation for Hankel determinants \cite{sulanke2008hankel,xin2009proof} or classic determinant technique. In particular, the special case $x=y=\alpha=\beta=1$ of Theorem \ref{theorem_somos4} was obtained by the third named author in \cite{xin2009proof}. It is also confirmed by using the classic determinant technique in \cite{chang2012conjecture}, which is used to solve a conjecture that a certain determinant satisfies a specified Somos-4 recurrence \cite{barry2010generalized,barry2011invariant}. In Section 3, we shall give a detailed proof for the Somos-4 case by using Sulanke and Xin's method of quadratic transformation, and prove the result for the extended $A_1$ $Q$-system by applying the classic determinant technique. As for the Somos-5 recurrence, we can solve it via the B\"{a}cklund transformation \cite{hirota2004direct,rogers1982backlund} of the Somos-4 recurrence, which will be introduced in Section 2. As a result, the Laurent property for the three models are confirmed from their determinant solutions. In Section 4, we prove the Somos-4 polynomials result in \cite{website:somospoly} in our own way and give more Somos polynomial sequences.

\section{Preliminaries}
\subsection{Sulanke and Xin's quadratic transformation for Hankel determinants}

There are many classical tools for evaluating Hankel determinants, such as  orthogonal polynomials approach (cf. e.g.\cite{krattenthaler2005advanced,cigler2011some,viennot1983}),
Gessel-Viennot-Lindstr\"{o}m theorem (cf. ep.g. \cite{bressoud1999proofs,gessel1985binomial,sulanke2008hankel}),the J -fractions (cf. e.g. \cite{krattenthaler2005advanced,wall1948analytic}) and the S-fractions (cf. e.g.\cite{jones1980continued}). Recently, Sulanke and Xin proposed a method of quadratic transformation for Hankel determinants \cite{sulanke2008hankel}
developed from the continued fraction method of Gessel and Xin \cite{gessel2006generating}. In \cite{xin2009proof}, Xin applied the special quadratic transformation to solve Somos' conjecture about the Hankel determinant solution to the Somos-4 recurrence with initial values $S_i=1, i=0,1,2,3$ and $\alpha=1,\beta=1$. We restate the iterative transformation as follows.
\begin{lemma}\label{lemma_xin}
 Given the initial values $a_0,b_0,c_0,d_0,e_0,f_0$, let the generating function $Q_{0}(x)$ be the unique power series solution of
 \begin{eqnarray*}
   &&Q_{0}(x)=\frac{a_{0}+b_{0}x}{1+c_{0}x+d_{0}x^2+x^2(e_{0}+f_{0}x)Q_{0}(x)}.
 \end{eqnarray*}
 If $a_{n+1},b_{n+1},c_{n+1},d_{n+1},e_{n+1},f_{n+1}$ are recursively defined by
 \begin{eqnarray*}
   &&a_{n+1}=-\frac{a_n^3e_n+a_n^2d_n-a_nb_nc_n+b_n^2}{a_n^2},\\
   &&b_{n+1}=-\frac{a_n^4f_n+c_na_n^3d_n-a_n^2c_n^2b_n+2a_nc_nb_n^2-a_n^2b_nd_n-b_n^3}{a_n^3},\\
   &&c_{n+1}=c_n,\\
   &&d_{n+1}=-\frac{a_n^2d_n-2a_nb_nc_n+2b_n^2}{a_n^2},\\
   &&e_{n+1}=-1,\\
   &&f_{n+1}=-\frac{b_n}{a_n},
 \end{eqnarray*}
 where we suppose that $a_n\neq0,n=0,1,2,\cdots$, then
 \begin{equation*}
   \det(H_n(Q_0))=a_0^{n}a_1^{n-1}\cdots a_{n-1}.
 \end{equation*}
  Here we remark that $H_n(R)$ denotes the Hankel determinant $\det(r_{i+j})_{0\leq i,j\leq n-1}$ for any power series $R(x)=\sum_{n=0}^\infty r_nx^n$.
\end{lemma}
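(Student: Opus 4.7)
The strategy is induction on $n$, the key step being a single ``pivot'' reduction of the form
\[
\det H_n(Q_0)=a_0^{\,n}\cdot\det H_{n-1}(Q_1),
\]
where $Q_1$ is a new power series defined from $Q_0$ by an explicit rational transformation. If I can establish (a) this Hankel-identity linking $Q_0$ to $Q_1$, and (b) that $Q_1$ again satisfies a quadratic relation of the prescribed shape with new parameters $(a_1,b_1,c_1,d_1,e_1,f_1)$ given by the stated formulas, then iterating $n$ times immediately yields $\det H_n(Q_0)=a_0^{\,n}a_1^{\,n-1}\cdots a_{n-1}$.

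For (a) I would follow the continued-fraction / Schur-complement approach of Gessel--Xin. Reading off $Q_0(x)=a_0+(\text{higher order})$ from the defining equation, the pivot step is to eliminate the first row and column of $(q_{i+j})_{0\le i,j\le n-1}$ by subtracting $a_0^{-1}$ times appropriate multiples of the first row/column from the others. This introduces a shifted power series whose Hankel matrix is exactly $(q'_{i+j})_{0\le i,j\le n-2}$ for a series $Q_1$ expressible rationally in terms of $Q_0$; factoring out the pivot $a_0$ from each of the $n$ eliminated rows accounts for the $a_0^{\,n}$. The precise definition of $Q_1$ will have the shape
\[
Q_1(x)=\frac{1}{x^2}\Bigl(\tfrac{1}{a_0}\bigl(Q_0(x)-a_0\bigr)-(\text{linear correction})\Bigr)\cdot(\text{rational factor in }Q_0),
\]
chosen so that $Q_1$ is again a power series.

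For (b), I would substitute this expression for $Q_1$ into the original quadratic equation for $Q_0$, clear denominators, and collect by powers of $Q_1$. The expectation is that the result reorganizes into
\[
Q_1(x)=\frac{a_1+b_1x}{1+c_1x+d_1x^2+x^2(e_1+f_1x)Q_1(x)},
\]
and matching coefficients of $1,x,x^2,x^3,Q_1,xQ_1$ on both sides will force exactly the six rational formulas stated for $a_1,\dots,f_1$. In particular $c_{n+1}=c_n$, $e_{n+1}=-1$ and $f_{n+1}=-b_n/a_n$ should fall out of the lowest-order matching relations; the hypothesis $a_n\ne 0$ is exactly what allows the pivot to be legal at every step.

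The main obstacle is (b): verifying that the quadratic form of the equation is preserved under the pivot transformation and that the six coefficients transform precisely as stated. This is a purely symbolic but somewhat heavy algebraic calculation. One convenient simplification is that after the first step $e_n$ stabilizes at $-1$ for all $n\ge 1$, so the recursion from $n=1$ onward is genuinely quadratic in a tighter form and the verification becomes cleaner. The base case $n=1$ is just $\det H_1(Q_0)=q_0=a_0$, obtained by reading off the constant term from the defining equation.
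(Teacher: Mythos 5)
You should first note that the paper itself offers no proof of this lemma: it is explicitly a restatement of the quadratic transformation of Sulanke and Xin \cite{sulanke2008hankel}, in the form used in \cite{xin2009proof}. So the relevant comparison is with that cited proof. Your overall architecture does match it: the result is indeed obtained by iterating a one-step reduction $\det H_n(Q_0)=a_0^{\,n}\det H_{n-1}(Q_1)$, where $Q_1$ satisfies a functional equation of the same shape with parameters $(a_1,\dots,f_1)$ given by the stated recursion, and the base case $\det H_1(Q_0)=a_0$ is read off from $Q_0(0)=a_0$. Your step (b) (substitute, clear denominators, match coefficients) is also the right kind of verification once $Q_1$ is correctly defined.

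The genuine gap is in your step (a). A single pivot on the $(0,0)$ entry of $(q_{i+j})_{0\le i,j\le n-1}$ produces the Schur complement with entries $q_{i+j}-q_iq_j/q_0$ for $1\le i,j\le n-1$, and this matrix is \emph{not} Hankel: $q_iq_j/q_0$ is not a function of $i+j$ (compare the $(1,3)$ and $(2,2)$ entries). So there is no power series $Q_1$ whose Hankel matrix is ``exactly'' the reduced matrix, and the claimed identity $\det H_n(Q_0)=a_0^{\,n}\det H_{n-1}(Q_1)$ does not follow from row/column elimination. (The exponent is also a symptom: a pivot eliminates one row and one column and extracts a single factor $a_0$, not $n$ of them; the extra $a_0^{\,n-1}$ must come from rescaling the new series, which presupposes that a new Hankel matrix has already been produced.) This is precisely the difficulty that the Gessel--Xin/Sulanke--Xin machinery is built to overcome: the reduction is carried out at the level of generating functions by composing several transformations each of which changes the Hankel determinant in a controlled way (e.g.\ the invariance $H_n(F)=H_n\bigl(F/(1+\gamma xF)\bigr)$, the scaling rule $H_n(cF)=c^nH_n(F)$, and an order-reduction step adapted to the quadratic equation), and it is this chain --- not a matrix pivot --- that both shrinks $n$ by one and returns a bona fide Hankel determinant of a new series. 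Without importing those transformation lemmas, or some equivalent device (orthogonal polynomials, an $LDL^T$ factorization of the full Hankel matrix, or a Gessel--Viennot argument), your induction cannot get off the ground.
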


The basic tool in our approach is the following surprising result, which solves the above recursion system.
 \begin{lemma}[Theorem 2 in \cite{xin2009proof}]\label{theorem_xin}
   Suppose $c_n=c,e_n=-1$, then the recursion system in Lemma \ref{lemma_xin} satisfies
   \begin{equation}
     a_{{n+2}}a_{{n+1}}+a_{{n+1}}a_{{n}}=2a_{{0}}a_{{1}}+a_{{0}} \left( 2\,f_{{1}}+c \right)  \left( f_{{0}}+c+f_{{1}} \right)-{\frac {{a_{{0}}}^{2} \left( f_{{0}}+c+f_{{1}} \right) ^{2}}{a_{{n+1}}}}.
   \end{equation}
 \end{lemma}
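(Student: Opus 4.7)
The plan is to interpret the claimed identity as the constancy of a scalar invariant of the iterate $n\mapsto(a_n,b_n,d_n,f_n)$ and to produce that invariant by first finding two simpler first integrals.

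\textbf{Reducing the system.} Under the assumptions $c_n\equiv c$ and $e_n\equiv -1$, the last recursion $f_{n+1}=-b_n/a_n$ eliminates $b_n$, leaving the state as $(a_n,d_n,f_n,f_{n+1})$. Substituting $b_n=-a_n f_{n+1}$ into the $a_{n+1}$ recursion rearranges it to
\[d_n \;=\; a_n - a_{n+1} - f_{n+1}(c+f_{n+1}),\]
so $d_n$ can be eliminated from all subsequent equations and the dynamics is expressed purely in terms of $a_n$ and $f_n$.

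\textbf{Extracting two first integrals.} Substituting this expression for $d_n$ (and the matching one for $d_{n+1}$) into the $d_{n+1}$ recursion should collapse to the relation
\[\text{(A)}\qquad a_{n+2} + f_{n+2}(c+f_{n+2}) \;=\; a_n + f_{n+1}(c+f_{n+1}),\]
while the same substitution into the $b_{n+1}$ recursion, rewritten via $b_{n+1}=-a_{n+1}f_{n+2}$, should collapse to
\[\text{(B)}\qquad a_{n+1}(c+f_{n+1}+f_{n+2}) \;=\; a_n(c+f_n+f_{n+1}),\]
so that $M := a_n(c+f_n+f_{n+1})$ is a first integral with value $a_0(f_0+c+f_1)$. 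Each of these two collapses requires nontrivial cancellation inside a dense polynomial expression, and this is the main technical obstacle of the proof.

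\textbf{Assembling the conserved cubic and identifying the constant.} Motivated by the shape of the target identity, I would set $\phi_m := f_m(c+f_m)$ and consider $J_n := a_{n+1}a_n+a_na_{n-1}+M^2/a_n$. Computing $J_{n+1}-J_n$ and using (A) to rewrite $a_{n+2}-a_n = \phi_{n+1}-\phi_{n+2}$ and $a_{n+1}-a_{n-1}=\phi_n-\phi_{n+1}$, together with the telescoping identity $M^2/a_{n+1}-M^2/a_n = M(f_{n+2}-f_n)$ and the one-step difference $(a_{n+1}-a_n)(c+f_{n+1}) = a_nf_n - a_{n+1}f_{n+2}$ (both consequences of (B)), the difference $J_{n+1}-J_n$ should factor through $\phi_{n+1}-f_{n+1}(c+f_{n+1})=0$, hence vanish. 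It remains to evaluate $J_1 = a_2a_1+a_1a_0+M^2/a_1$: using (A) at $n=0$ to write $a_2=a_0+\phi_1-\phi_2$ and (B) to write $M^2/a_1 = M(f_1+c+f_2)$, everything reduces to the elementary identity $\phi_1-\phi_2 = (f_1+c+f_2)(f_1-f_2)$, giving $J_1 = 2a_0a_1 + a_0(2f_1+c)(f_0+c+f_1)$, which matches the right-hand side of the claimed identity.
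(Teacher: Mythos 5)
Your proposal is correct, but there is nothing in the paper to compare it against: Lemma~\ref{theorem_xin} is imported verbatim as Theorem~2 of \cite{xin2009proof} and is not proved in this paper at all, so your argument is a genuinely self-contained derivation rather than a variant of one given here. I checked the two ``collapses'' you flag as the main technical obstacle, and both go through cleanly. Substituting $b_n=-a_nf_{n+1}$ and $e_n=-1$ into the $a_{n+1}$-recursion gives exactly $d_n=a_n-a_{n+1}-f_{n+1}(c+f_{n+1})$; feeding this into $d_{n+1}=-d_n-2f_{n+1}(c+f_{n+1})$ yields your relation (A), and the $b_{n+1}$-recursion simplifies to $a_{n+1}f_{n+2}=a_nf_n+(c+f_{n+1})(a_n-a_{n+1})$, which is your relation (B), so $M=a_n(c+f_n+f_{n+1})$ is indeed conserved. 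For the telescoping step your phrasing (``factors through $\phi_{n+1}-f_{n+1}(c+f_{n+1})=0$'') is vacuous as written, but the cancellation itself is real and cleaner than you suggest: writing $\phi_m=f_m(c+f_m)$ and using $\phi_m-\phi_{m+1}=(f_m-f_{m+1})(c+f_m+f_{m+1})$ together with (A) and (B), one gets
\begin{equation*}
a_{n+1}a_{n+2}-a_{n-1}a_n=a_{n+1}(\phi_{n+1}-\phi_{n+2})+a_n(\phi_n-\phi_{n+1})=M(f_n-f_{n+2}),
\end{equation*}
which exactly cancels $M^2/a_{n+1}-M^2/a_n=M(f_{n+2}-f_n)$, so $J_{n+1}=J_n$ for $n\ge1$; your evaluation $J_1=2a_0a_1+M(2f_1+c)$ via $\phi_1-\phi_2=(f_1-f_2)(c+f_1+f_2)$ is also correct. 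What your route buys is a transparent conceptual reading of Xin's identity as the constancy of the invariant $a_{n+1}a_n+a_na_{n-1}+M^2/a_n$ built from the simpler first integral $M$, at the price of the polynomial bookkeeping needed to establish (A) and (B); in a final write-up you should display those two verifications explicitly rather than asserting that they ``should collapse.''
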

\subsection{B\"{a}cklund transformation}

A B\"{a}cklund transformation is a transformation between a solution $u$ of a given
differential or difference equation,
 \begin{equation*}
  \mathcal{L}_1(u)=0,
 \end{equation*}
and another solution $v$ of another differential or difference equation,
 \begin{equation*}
  \mathcal{L}_2(v)=0,
 \end{equation*}
where $\mathcal{L}_2$ may be the same as, or different from, $\mathcal{L}_1$. It is an important tool for finding new solutions in soliton theory and integrable systems. See \cite{hirota2004direct,rogers1982backlund} for more details. Here we shall give an example, which is concerned about B\"{a}cklund transformation in bilinear form between two solutions of the Somos-4 recurrence. It can be used for solving the Somos-5 recurrence.

Firstly, we rewrite the Somos-4 recurrence as a bilinear form
\begin{equation}\label{bi_somos4}
  (e^{2D_{n}}-\alpha e^{D_{n}}-\beta)S_n\cdot S_n=0,
\end{equation}
where  the Hirota's bilinear operators \cite{hirota2004direct} are
defined as follows:
\begin{eqnarray*}
&&D_z^mD_t^ka\cdot b\equiv(\frac{\partial}{\partial z}-\frac{\partial}{\partial z^\prime})^m(\frac{\partial}{\partial t}-\frac{\partial}{\partial t^\prime})^ka(z,t)b(z^\prime,t^\prime)|_{z^\prime=z,t^\prime=t},\\
&&e^{\delta D_n}a_n\cdot b_n\equiv \exp\
[\delta(\frac{\partial}{\partial n}-\frac{\partial}{\partial
n^\prime})]a_nb_{n^\prime}|_{n^\prime=n}=a_{n+\delta}b_{n-\delta}.
\end{eqnarray*}

Then, we have the following lemma.
\begin{lemma}\label{lemma_somosbt}
The Somos-4 recurrence has the following bilinear B\"{a}cklund transformation:
\begin{eqnarray}
&&(e^{-D_{n}}-\lambda e^{D_{n}}-\mu )f_n\cdot g_n=0,\label{somos4_backlund_1}\\
&&(\mu e^{3/2D_{n}}-\eta e^{-1/2D_{n}}-\alpha e^{1/2D_{n}})f_n\cdot g_n=0,\label{somos4_backlund_2}
\end{eqnarray}
where $\mu,\lambda,\eta$ are arbitrary constants.
\end{lemma}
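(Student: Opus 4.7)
The plan is to verify \eqref{somos4_backlund_1}--\eqref{somos4_backlund_2} as a genuine Bäcklund transformation for the bilinear Somos-4 equation \eqref{bi_somos4}: I will show that if $f_n$ solves \eqref{bi_somos4} and $g_n$ is linked to $f_n$ by \eqref{somos4_backlund_1} and \eqref{somos4_backlund_2}, then $g_n$ also solves \eqref{bi_somos4} with the same $\alpha$ and with $\beta$ expressible in terms of the BT parameters $\lambda,\mu,\eta$. This is the standard notion of a bilinear BT in Hirota's direct method and is what the statement means by calling these two equations a BT.

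First I would expand the Hirota operators in components. The Somos-4 bilinear equation reads $f_{n+2}f_{n-2}-\alpha f_{n+1}f_{n-1}-\beta f_n^2=0$. Equation \eqref{somos4_backlund_1} expands as
\[
f_{n-1}g_{n+1}-\lambda f_{n+1}g_{n-1}-\mu f_n g_n=0,
\]
and \eqref{somos4_backlund_2}, evaluated with $n$ replaced by $n\pm \tfrac12$, yields the two integer-indexed relations
\[
\mu f_{n+2}g_{n-1}-\eta f_n g_{n+1}-\alpha f_{n+1}g_n=0,\qquad \mu f_{n+1}g_{n-2}-\eta f_{n-1}g_n-\alpha f_n g_{n-1}=0.
\]
Then I form the compatibility functional
\[
P_n := \bigl[g_{n+2}g_{n-2}-\alpha g_{n+1}g_{n-1}-\beta g_n^2\bigr]f_n^2-\bigl[f_{n+2}f_{n-2}-\alpha f_{n+1}f_{n-1}-\beta f_n^2\bigr]g_n^2,
\]
in which the $\beta$-terms cancel identically, leaving a degree-four expression in the $f_i,g_j$ that I must show vanishes modulo the three integer-indexed BT relations above. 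Each of the two resulting groupings $g_{n+2}g_{n-2}f_n^2 - f_{n+2}f_{n-2}g_n^2$ and $g_{n+1}g_{n-1}f_n^2 - f_{n+1}f_{n-1}g_n^2$ can be handled by using the BT relations to eliminate, respectively, $f_{n\pm 2}$, $g_{n\pm 1}$ and $g_{n\pm 2}$, $f_{n\pm 1}$ in favor of lower-shift products $f_j g_k$; after substitution the expression collapses into a multiple of the Somos-4 residue of $f_n$, which vanishes by hypothesis. The precise relation determining $\beta$ as a rational function of $\alpha,\lambda,\mu,\eta$ emerges naturally from this elimination.

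The main obstacle is purely the combinatorial bookkeeping of the elimination: all monomials have degree four in the $f_i$ and $g_j$, and one must apply the BT relations in exactly the right order (and occasionally combine them multiplicatively with $f_n g_n$) so that the cancellations become manifest rather than only apparent after many pages of algebra. A cleaner route, which I would attempt first, is to recognize $P_n$ as the image of the BT functionals under an appropriate exchange identity for the discrete Hirota operator $e^{D_n}$ (analogous to the classical identities used to verify BTs for KdV- or Toda-type bilinear equations), which would reduce the verification to a single line; failing that, the brute-force elimination always succeeds, since the BT was presumably constructed precisely so that this computation closes.
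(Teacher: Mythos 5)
Your plan is correct, and its ``cleaner route'' is in fact exactly the paper's proof. The paper verifies the BT in precisely the sense you describe---assuming $f_n$ solves the bilinear Somos-4 equation and $g_n$ is linked to it by \eqref{somos4_backlund_1}--\eqref{somos4_backlund_2}, it shows $P:=(e^{2D_n}-\alpha e^{D_n}-\beta)g_n\cdot g_n=0$---by forming $-Pf_n^2$ as the cross-multiplied difference of the two bilinear residues (the $\beta$-terms cancelling, as you note) and then invoking discrete exchange identities: $(e^{2D_n}f\cdot f)g^2-f^2(e^{2D_n}g\cdot g)=2\sinh(D_n)(e^{D_n}f\cdot g)\cdot(e^{-D_n}f\cdot g)$, its half-shift analogue, the identity $\sinh(D_n)(e^{D_n}f\cdot g)\cdot(f g)=\sinh(\tfrac12 D_n)(e^{3/2D_n}f\cdot g)\cdot(e^{-1/2D_n}f\cdot g)$, and $\sinh(\delta D_n)h\cdot h=0$. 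Substituting \eqref{somos4_backlund_1} into the first product kills the $\lambda$-term by antisymmetry, substituting \eqref{somos4_backlund_2} into the second kills the $\eta$-term, and the third identity shows the two surviving $\mu$-terms cancel. Your brute-force fallback is just the component expansion of this computation and does close; two small refinements to how you describe it. First, the mechanism is substitution for whole products such as $f_{n-1}g_{n+1}$ (so the $\lambda$- and $\eta$-terms drop out by antisymmetry of the $\sinh$ pairing), not elimination of individual shifted variables $f_{n\pm2}$ or $g_{n\pm1}$. Second, the combination $\bigl(g_{n+2}g_{n-2}-\alpha g_{n+1}g_{n-1}\bigr)f_n^2-\bigl(f_{n+2}f_{n-2}-\alpha f_{n+1}f_{n-1}\bigr)g_n^2$ vanishes identically modulo the two BT relations alone---it does not collapse to a multiple of the Somos-4 residue of $f_n$; the hypothesis on $f_n$ enters only at the last step, when one reads off the $g$-residue from $[\,g\text{-residue}\,]f_n^2=[\,f\text{-residue}\,]g_n^2$. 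Relatedly, this identity forces $g_n$ to satisfy Somos-4 with the \emph{same} $\beta$ as $f_n$; $\beta$ is determined by the sequence (in the paper's Somos-5 application, by the first five values of $f_n$), not as a rational function of $\alpha,\lambda,\mu,\eta$ alone.
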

\begin{proof}
  Let $f_n$ be a solution of the Somos-4 recurrence. What we need to prove is that $g_n$ satisfying
\eqref{somos4_backlund_1} and \eqref{somos4_backlund_2} is another solution of the Somos-4 recurrence, i.e.,
\begin{equation*}
  P\equiv(e^{2D_{n}}-\alpha e^{D_{n}}-\beta)g_n\cdot g_n=0,
\end{equation*}
Then, using the bilinear operator identities
\begin{eqnarray*}
  &&(e^{2D_{n}}f_n\cdot f_n)g_n^2-f_n^2(e^{2D_{n}}g_n\cdot g_n)=2\sinh(D_n)(e^{D_{n}}f_n\cdot g_n)\cdot(e^{-D_{n}}f_n\cdot g_n),\\
  &&(e^{D_{n}}f_n\cdot f_n)g_n^2-f_n^2(e^{D_{n}}g_n\cdot g_n)=2\sinh(1/2D_n)(e^{1/2D_{n}}f_n\cdot g_n)\cdot(e^{-1/2D_{n}}f_n\cdot g_n),\\
  &&\sinh(D_n)(e^{D_{n}}f_n\cdot g_n)\cdot(f_n\cdot g_n)=\sinh(1/2D_n)(e^{3/2D_{n}}f_n\cdot g_n)\cdot(e^{-1/2D_{n}}f_n\cdot g_n),\\
  &&\sinh(\delta D_n)f_n\cdot f_n=0,
\end{eqnarray*}
we have
\begin{eqnarray*}
  -Pf_n^2&=&[(e^{2D_{n}}-\alpha e^{D_{n}}-\beta)f_n\cdot f_n]g_n^2-f_n^2(e^{2D_{n}}-\alpha e^{D_{n}}-\beta)g_n\cdot g_n\\
  &=&[(e^{2D_{n}}f_n\cdot f_n)g_n^2-f_n^2(e^{2D_{n}}g_n\cdot g_n)]-\alpha [(e^{D_{n}}f_n\cdot f_n)g_n^2-f_n^2(e^{D_{n}}g_n\cdot g_n)]\\
  &=&2\sinh(D_n)(e^{D_{n}}f_n\cdot g_n)\cdot(e^{-D_{n}}f_n\cdot g_n)\\
  &&-2\alpha\sinh(1/2D_n)(e^{1/2D_{n}}f_n\cdot g_n)\cdot(e^{-1/2D_{n}}f_n\cdot g_n)\\
  &=&2\mu\sinh(D_n)(e^{D_{n}}f_n\cdot g_n)\cdot(f_n\cdot g_n)\\
  &&-2\mu\sinh(1/2D_n)(e^{3/2D_{n}}f_n\cdot g_n)\cdot(e^{-1/2D_{n}}f_n\cdot g_n)\\
  &=&0.
\end{eqnarray*}
Thus the proof is completed.
\end{proof}
\section{Determinant solution and Laurent property}

\subsection{On the Somos-4 recursion system}

The Hankel determinant formula in Theorem \ref{theorem_somos4} is discovered by employing Sulanke and Xin's quadratic transformation for Hankel determinants. The proof is given below. 

{\em Proof of Theorem \ref{theorem_somos4}.}
Let $P(t)=\sum_{n=0}^\infty p_nt^n$ be the generating function of the sequence $\{p_n\}_{n=0}^\infty$, we only need to prove $S_n=\det(H_n(P))$. Additionally, it is easy to see that $P(t)$ satisfies
\begin{equation}
  P(t)=\frac{a_{0}+b_{0}t}{1+c_{0}t+d_{0}t^2+t^2(e_{0}+f_{0}t)P(t)}
\end{equation}
where
 \begin{eqnarray*}
   &&a_{0}=x,\\
   &&b_{0}=-\frac{\beta x^2-y^2+\alpha x^3}{\sqrt\alpha y},\\
   &&c_{0}=-\frac{\beta x^2-y^2+\alpha x^3-\alpha y}{\sqrt\alpha xy},\\
   &&d_{0}=-\frac{\beta+\alpha x-xy}{y},\\
   &&e_{0}=-1,\\
   &&f_{0}=0.
 \end{eqnarray*}
By Lemma \ref{lemma_xin}, the recursion \eqref{somos4} we need to prove is transformed to
 \begin{equation}\label{a_somos4}
 a_na_{n-1}a_{n-2}=\alpha+\beta/a_{n-1}.
\end{equation}

Lemma \ref{lemma_xin} also gives $a_1=y/x^2$ and $f_1=\frac{\beta x^2-y^2+\alpha x^3}{\sqrt\alpha xy}$. Thus applying Lemma \ref{theorem_xin} gives
\begin{equation}\label{a_id}
  a_{n+2}=(\frac{\beta x^2+y^2+\alpha x^3+\alpha y}{xy})/a_{n+1}-\alpha/a_{n+1}^2-a_n.
\end{equation}

By substituting \eqref{a_id} with $n$ replaced by $n-2$ into \eqref{a_somos4}, we need to show that
\begin{equation} \label{a_id2}
 T(n):= (\frac{\beta x^2+y^2+\alpha x^3+\alpha y}{xy})a_{n-1}a_{n-2}-\alpha a_{n-2}-a_{n-1}^2a_{n-2}^2-\alpha a_{n-1}-\beta=0
\end{equation}
holds for $n\ge 2$.

We prove this by induction. It is easy to confirm $T(2)=0$. Assume $T(n-1)=0$. Replacing $a_{n-1}$ by \eqref{a_id} with $n$ replaced by $n-3$ in $T(n)$, we have
\begin{equation*}
  T(n)=  (\frac{\beta x^2+y^2+\alpha x^3+\alpha y}{xy})a_{n-2}a_{n-3}-\alpha a_{n-3}-a_{n-2}^2a_{n-3}^2-\alpha a_{n-2}-\beta=T(n-1)=0.
\end{equation*}
This completes the proof.
\endproof

\subsection{On the Somos-5 recursion system}

In finding the determinant solution of the Somos-5 recursion, the key observation is that the even and odd terms of the
Somos-5 sequence are both Somos-4 sequence. This observation has been made by Hone (Proposition 2.8 in \cite{hone2007sigma}), through
making the connection with a second order nonlinear mapping with a first integral. Here we give an alternative explanation from the view of B\"acklund transformation, which seems simpler and more intuitive.
Now we prove Theorem \ref{theorem_somos5} by using Lemma \ref{lemma_somosbt} and Theorem \ref{theorem_somos4}.

{\em Proof of Theorem \ref{theorem_somos5}.}
If we let $f_n=S_{2n}$ and $g_n=S_{2n+1}$, then we get two coupled equations

\begin{eqnarray}
&&g_{n+1}f_{n-1}-\tilde{\alpha}f_{n+1}g_{n-1}-\tilde{\beta}g_nf_n=0,\label{somos5_1}\\
&&f_{n+2}g_{n-1}-\tilde{\alpha}g_{n+1}f_n-\tilde{\beta}f_{n+1}g_n=0.\label{somos5_2}
\end{eqnarray}

Recall that, as is indicated in Lemma \ref{lemma_somosbt}, the Somos-4 recurrence \eqref{somos4} has the following B\"{a}cklund transformation

\begin{eqnarray}
&&g_{n+1}f_{n-1}-\lambda f_{n+1}g_{n-1}-\mu g_nf_n=0,\label{somos4bt1}\\
&&\mu f_{n+2}g_{n-1}-\eta g_{n+1}f_n-\alpha f_{n+1}g_n=0,\label{somos4bt2}
\end{eqnarray}
where $\mu,\lambda,\eta$ are arbitrary constants. When $\mu=\tilde{\beta},\lambda=\tilde{\alpha},$ $\eta=\tilde{\alpha}\tilde{\beta},\alpha=\tilde{\beta}^2$, it is easy to check that \eqref{somos4bt1}, \eqref{somos4bt2} are the same as \eqref{somos5_1},\eqref{somos5_2}.  This means that subsequences $f_n=S_{2n}$ and $g_n=S_{2n+1}$ of the Somos-5 recurrence \eqref{somos5} may satisfy the same Somos-4 recurrence \eqref{somos4}  with $\alpha=\tilde{\beta}^2$ and $\beta=\frac{\tilde{\alpha}(\tilde{\alpha}^3yxz+\tilde{\alpha}^2\tilde{\beta}y^2x^2+\tilde{\alpha}\tilde{\beta}y^2z+\tilde{\alpha}\tilde{\beta}z^2+2\tilde{\alpha}^2zyx+\tilde{\alpha}^2\tilde{\beta}x^2z+\tilde{\beta}^2\tilde{\alpha}yx^3)}{xyz}$, where $\beta$ is determined by the first five values of $f_n$. In fact, by use of the meaning of B\"{a}cklund transformation, this can be confirmed  by induction.  Thus, it is easy to construct the general solution to the Somos-5 recurrence \eqref{somos5}. And one can complete the proof after some substitutions and calculations.
\endproof

It is natural to ask if we can give a more direct proof to the Somos-4 recursion of $S_{2n}$ (or $S_{2n-1}$). The answer is positive. We sketch the idea for $S_{2n}$ in the end of this subsection. It is easy to perform the detailed proof by Maple.

We can eliminate the odd $S_n$ to obtain the recursion
\begin{equation}\label{even_recurrence}
  {\frac {S_{{2\,n-6}}S_{{2\,n}}-{\tilde\alpha}^{2}S_{{2\,n-4}}S_{{2\,n-2}}}{
\tilde\beta\, \left( S_{{2\,n-6}}S_{{2\,n-2}}+\tilde\alpha\,{S_{{2\,n-4}}}^{2}
 \right) }}={\frac {\tilde\beta\, \left( S_{{2\,n-6}}S_{{2\,n-2}}+\tilde\alpha\,{S
_{{2\,n-4}}}^{2} \right) }{S_{{2\,n-8}}S_{{2\,n-2}}-{\tilde\alpha}^{2}S_{{2
\,n-6}}S_{{2\,n-4}}}},
\end{equation}
using which we can prove
$$
S_{2n}S_{2n-8}=\alpha S_{2n-2}S_{2n-6}+\beta S_{2n-4}^2
$$
by induction.

Eliminating $S_{2n}$ in the above equation by using \eqref{even_recurrence}, we are left to show the nullity of a relation, saying $target(S_{2n-2},S_{2n-4},S_{2n-6},S_{2n-8})$. Eliminating $S_{2n-2}$ in $target(S_{2n-2},S_{2n-4},S_{2n-6},S_{2n-8})$ by using \eqref{even_recurrence} with $n$ replaced by $n-1$, we obtain $$target(S_{2n-2},S_{2n-4},S_{2n-6},S_{2n-8})=\frac{S_{2n-4}^2S_{2n-10}}{S_{2n-6}^2S_{2n-8}}target(S_{2n-4},S_{2n-6},S_{2n-8},S_{2n-10}).$$
Then the proof is completed after checking $target(S_{6},S_{4},S_{2,}S_{0})=0$.

\subsection{On the  extended $A_1$ $Q$-system}
Let us first prove Theorem \ref{theorem_A1Q} by classic determinant technique. This approach gives no hint how we discovered the Hankel determinant formula, but this proof seems to be shorter than using the Sulanke and Xin's quadratic transformation method.

{\em Proof of Theorem \ref{theorem_A1Q}.}
Let $H_n^{(l)}$ denote $\det(p_{i+j+l})_{0\leq i,j\leq n-1}$ and we shall use the conventions
 \begin{equation}\label{convention}
 \begin{array}{l}
 H_0^{(l)}=1,\\
 H_n^{(l)}=0 \ \ \text{for $n<0$.}
 \end{array}
 \end{equation}

Firstly, we assert that there hold
\begin{eqnarray}
&&H_n^{(1)}=1, \quad n\geq0;\label{h_n_1}\\
&&H_n^{(0)}=xH_{n-1}^{(0)}+\beta H_{n-2}^{(2)},\quad n\geq1\label{h_n_0}.
\end{eqnarray}
Employing the Jacobi determinant identity \cite{aitken1959determinants,brualdi1983determinantal}, we get
\begin{equation}
H_n^{(0)}H_{n-2}^{(2)}=H_{n-1}^{(0)}H_{n-1}^{(2)}-(H_{n-1}^{(1)})^2, \quad n\geq2. \label{jacobiid}
\end{equation}
By use of (\ref{h_n_1}) and (\ref{h_n_0}), replace $H_{n-1}^{(1)}$, $H_{n-2}^{(2)}$ and $H_{n-1}^{(2)}$ in (\ref{jacobiid}), then we can obtain
\begin{eqnarray*}
H_{n+1}^{(0)}H_{n-1}^{(0)}=(H_{n}^{(0)})^2+\beta,\quad n\geq1.
\end{eqnarray*}
Noting that $S_n=H_n^{(0)}$, thus it suffices to confirm the formulae (\ref{h_n_1}) and (\ref{h_n_0}).

It is noted that $p_m$ also satisfy the following recurrence:
\begin{equation}
p_m=\frac{\beta+1}{x}p_{m-1}+\sum_{k=1}^{m-1}p_kp_{m-1-k}, \quad  m\geq2. \label{recurrence}
\end{equation}
We will prove (\ref{h_n_1}) and (\ref{h_n_0}) by employing row and column operations for the determinants and using this recurrence.

Let's consider $H_n^{(1)}$ firstly.

\textit{Step 1}: Subtract the $i$-th column multiplied by $p_{n-1-i}$ from the $n$-th
column for $i=1,2,\cdots, n-1$ and subtract the ($n-1$)-th column
multiplied by $\frac{\beta+1}{x}$ from the $n$-th column. Next, applying a similar
procedure to the $(n-1)$-th,$(n-2)$-th,$\cdots,2$-nd columns, by using the recursion relation (\ref{recurrence}), we have
\begin{eqnarray*}
H_{n}^{(1)}&=&\left|\begin{array}{cccc}
p_{1}&0&\ldots&0\\
p_{2}&p_1p_{1}&\ldots&p_1p_{n-1}\\
p_{3}&\displaystyle{\sum_{i=1}^2p_ip_{3-i}}&\ldots&\displaystyle{\sum_{i=1}^2p_ip_{n+1-i}}\\
\vdots&\vdots &\ddots & \vdots\\
p_{n}&\displaystyle{\sum_{i=1}^{n-1}p_ip_{n-i}}&\ldots&\displaystyle{\sum_{i=1}^{n-1}p_ip_{2n-2-i}}
  \end{array}\right|\nonumber\\
  &=&\left|\begin{array}{ccc}
p_1p_{1}&\ldots&p_1p_{n-1}\\
\displaystyle{\sum_{i=1}^2p_ip_{3-i}}&\ldots&\displaystyle{\sum_{i=1}^2p_ip_{n+1-i}}\\
\vdots &\ddots & \vdots\\
\displaystyle{\sum_{i=1}^{n-1}p_ip_{n-i}}&\ldots&\displaystyle{\sum_{i=1}^{n-1}p_ip_{2n-2-i}}
  \end{array}\right|.\nonumber\\
\end{eqnarray*}

\textit{Step 2}: Perform row operations for the above determinant.
For fixed $k=2,3,\cdots,n-1$, we subtract the $i$-th row multiplied by
$p_{k+1-i}/p_1$ for $i=1,\cdots,k-1$ from the $k-$th row. Then it follows that
(\ref{h_n_1}) holds.

Now let's turn to the proof of (\ref{h_n_0}). Obviously, the result holds for $n=1$. In the following we consider the case of $n>1$.

\textit{Step 1}: Subtract the $i$-th column multiplied by $p_{n-1-i}$ from the $n$-th
column for $i=2,3,\cdots, n-1$ and subtract the ($n-1$)-th column
multiplied by $\frac{\beta+1}{x}$ from the $n$-th column. Next, applying a similar
procedure to the $(n-1)$-th,$(n-2)$-th,$\cdots,2$-nd columns, by using the recursion relation (\ref{recurrence}), we have
\begin{eqnarray*}
H_{n}^{(0)}&=&\left|\begin{array}{ccccc}
p_{0}&-\beta&0&\ldots&0\\
p_{1}&p_1p_{0}&p_1p_{1}&\ldots&p_1p_{n-2}\\
p_{2}&\displaystyle{\sum_{i=1}^2p_ip_{2-i}}&\displaystyle{\sum_{i=1}^2p_ip_{3-i}}&\ldots&\displaystyle{\sum_{i=1}^2p_ip_{n-i}}\\
\vdots&\vdots&\vdots &\ddots & \vdots\\
p_{n-1}&\displaystyle{\sum_{i=1}^{n-1}p_ip_{n-1-i}}&\displaystyle{\sum_{i=1}^{n-1}p_ip_{n-i}}&\ldots&\displaystyle{\sum_{i=1}^{n-1}p_ip_{2n-3-i}}
  \end{array}\right|\nonumber\\
  &=&x\left|\begin{array}{cccc}
p_1p_{0}&p_1p_{1}&\ldots&p_1p_{n-2}\\
\displaystyle{\sum_{i=1}^2p_ip_{2-i}}&\displaystyle{\sum_{i=1}^2p_ip_{3-i}}&\ldots&\displaystyle{\sum_{i=1}^2p_ip_{n-i}}\\
\vdots&\vdots &\ddots & \vdots\\
\displaystyle{\sum_{i=1}^{n-1}p_ip_{n-1-i}}&\displaystyle{\sum_{i=1}^{n-1}p_ip_{n-i}}&\ldots&\displaystyle{\sum_{i=1}^{n-1}p_ip_{2n-3-i}}
  \end{array}\right|\\
  &&+\beta\left|\begin{array}{cccc}
p_1&p_1p_{1}&\ldots&p_1p_{n-2}\\
p_2&\displaystyle{\sum_{i=1}^2p_ip_{3-i}}&\ldots&\displaystyle{\sum_{i=1}^2p_ip_{n-i}}\\
\vdots&\vdots &\ddots & \vdots\\
p_{n-1}&\displaystyle{\sum_{i=1}^{n-1}p_ip_{n-i}}&\ldots&\displaystyle{\sum_{i=1}^{n-1}p_ip_{2n-3-i}}
  \end{array}\right|\nonumber\\
\end{eqnarray*}

\textit{Step 2}: Perform row operations for the above two determinants.
For fixed $k=2,3,\cdots,n-1$, we subtract the $i$-th row multiplied by
$p_{k+1-i}/p_1$ for $i=1,\cdots,k-1$ from the $k-$th row. Then it follows that
(\ref{h_n_0}) holds.

Therefore, we complete the proof.
\endproof

\textbf{Remark}: If we let $f_n=\gamma^{n/2}S_n$, then $f_n$ satisfy $f_{n}f_{n-2}=f_{n-1}^2+\beta\gamma^{n-1}$ with $f_0=1$ and $f_1=x\sqrt\gamma$. And we can see that $\{f_n\}$ gives the Fibonacci sequence when  $\gamma=-1$, $\beta=1$ and $x=-\sqrt{-1}$. It means that every Fibonacci number can be expressed as a Hankel determinant.

Moreover, it is noted that the extended $A_1$ $Q$-system satisfy a three term recurrence \cite{hone2007singularity}. In the following we use the relations of Hankel determinants to show it.

\begin{theorem}
  The extended $A_1$ $Q$-system with $S_0=1,S_1=x$ satisfy the following three term recurrence
  \begin{equation}
    S_n=\frac{x^2+1+\beta}{x}S_{n-1}-S_{n-2}.
  \end{equation}
\end{theorem}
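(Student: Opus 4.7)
The plan is to exhibit the ratio $J_n:=(S_{n-1}+S_{n+1})/S_n$ as a conserved quantity of the system, then evaluate it from the initial data. Everything will be done with the Hankel identities \eqref{h_n_1}, \eqref{h_n_0} and the Jacobi identity \eqref{jacobiid} already established in the proof of Theorem~\ref{theorem_A1Q}, so no fresh determinant manipulation is required.

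First I would write the Jacobi identity \eqref{jacobiid} (using $H_m^{(1)}=1$) at two consecutive indices,
\[
H_n^{(0)}H_{n-2}^{(2)}=H_{n-1}^{(0)}H_{n-1}^{(2)}-1,\qquad H_{n+1}^{(0)}H_{n-1}^{(2)}=H_n^{(0)}H_n^{(2)}-1,
\]
and subtract them so the constants cancel. Regrouping the four resulting products yields the symmetric identity
\[
\frac{H_{n-1}^{(0)}+H_{n+1}^{(0)}}{H_n^{(0)}} \;=\; \frac{H_{n-2}^{(2)}+H_n^{(2)}}{H_{n-1}^{(2)}},
\]
call it $(\star)$. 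Its point is to equate $J_n$ with the analogous ratio formed from the once-shifted Hankel sequence $\{H_m^{(2)}\}$. Next, using \eqref{h_n_0} in the equivalent form $\beta H_m^{(2)}=S_{m+2}-xS_{m+1}$, I would eliminate every $H^{(2)}$ on the right-hand side of $(\star)$; the factor of $\beta$ cancels between numerator and denominator, and $(\star)$ becomes
\[
J_n \;=\; \frac{(S_{n+2}+S_n)-x(S_{n+1}+S_{n-1})}{S_{n+1}-xS_n}.
\]
Substituting $S_{n+1}+S_{n-1}=J_nS_n$ in the numerator makes the $x$-terms drop out and leaves $J_nS_{n+1}=S_n+S_{n+2}$, i.e.\ $J_n=J_{n+1}$.

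Once $J_n$ is known to be independent of $n$, the rest is immediate: $J_1=(S_0+S_2)/S_1=(x^2+1+\beta)/x$, where $S_2=x^2+\beta$ is the $n=2$ instance of \eqref{new_recurrence}. Hence $S_{n+1}+S_{n-1}=\frac{x^2+1+\beta}{x}S_n$ for every $n\ge 1$, and shifting $n\to n-1$ gives exactly the recurrence in the statement. The main obstacle, as I see it, is recognising the symmetric combination $(\star)$ of two Jacobi identities; once that form is written down, the remainder is a short algebraic manipulation, together with the easy bookkeeping that every quantity in sight is defined for $n\ge 1$ (with the usual conventions \eqref{convention}) and the base case $S_2=\gamma S_1-S_0$ being immediate from $S_2=x^2+\beta$.
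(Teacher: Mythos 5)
Your argument is correct, but it takes a genuinely different route from the paper's. The paper proves a second determinant identity, $H_n^{(0)}=xH_{n-1}^{(2)}-H_{n-2}^{(2)}$ (equation \eqref{h_n_0-2}), by repeating the row/column manipulations used for Theorem \ref{theorem_A1Q}, and then combines it with \eqref{h_n_0} to eliminate the $H^{(2)}$ terms. You instead extract the recurrence from identities already on the table: subtracting the Jacobi identity \eqref{jacobiid} at consecutive indices to exhibit the conserved ratio $J_n$, then translating back via $\beta H_m^{(2)}=S_{m+2}-xS_{m+1}$. This buys a proof with no further determinant manipulation at all, at the cost of dividing by $H_{n-1}^{(2)}$ (equivalently by $S_{n+1}-xS_n$), whose nonvanishing is not among the paper's standing assumptions; the fix is to keep $(\star)$ and the elimination in product form, $(H_{n+1}^{(0)}+H_{n-1}^{(0)})H_{n-1}^{(2)}=H_n^{(0)}(H_n^{(2)}+H_{n-2}^{(2)})$, multiply through by $\beta$, and cancel, after which only the harmless divisions by $S_n$ and $S_{n+1}$ remain. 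Two further remarks. First, \eqref{jacobiid} is stated for $n\ge2$, so your chain $J_n=J_{n+1}$ starts at $n=2$; the link $J_1=J_2$ needs either the conventions \eqref{convention} (under which \eqref{jacobiid} does hold trivially at $n=1$) or a one-line direct check, as you acknowledge. Second, once everything is rewritten in terms of the $S_n$, your conservation law is precisely the subtraction of $S_{n+1}S_{n-1}=S_n^2+\beta$ from $S_{n+2}S_n=S_{n+1}^2+\beta$, so it follows from \eqref{new_recurrence} alone with no Hankel input; the paper deliberately routes the proof through determinant relations, whereas your detour through \eqref{jacobiid} ultimately collapses to this elementary two-line argument.
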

\begin{proof}
We shall firstly prove the formula
  \begin{equation}\label{h_n_0-2}
     H_n^{(0)}=xH_{n-1}^{(2)}-H_{n-2}^{(2)},
  \end{equation}
where the notations are the same as that in the previous proof.
  It is easy to see that
$H_n^{(0)}=xH_{n-1}^{(2)}+\tilde{H}_n$, where
\begin{eqnarray*}
\tilde{H}_n=\left|\begin{array}{ccccc}
0&p_1&p_2&\cdots &p_{n-1}\\
p_1&p_2&p_3&\cdots &p_{n}\\
p_2&p_3&p_4&\cdots &p_{n+1}\\
\vdots&\vdots &\vdots&\ddots & \vdots\\
p_{n-1}&p_{n}&p_{n+1}&\cdots &p_{2n-1}
  \end{array}\right|.\nonumber\\
\end{eqnarray*}
Based on this observation, it suffices to prove that $\tilde{H}_n=-H_{n-2}^{(2)}$.

\textit{Step 1}: Subtract the $i$-th column multiplied by $p_{n-1-i}$ from the $n$-th
column for $i=1,2,\cdots, n-1$ and subtract the ($n-1$)-th column
multiplied by $\frac{\beta+1}{x}$ from the $n$-th column. Next, applying a similar
procedure to the $(n-1)$-th,$(n-2)$-th,$\cdots,3$-rd columns, by using the recursion relation (\ref{recurrence}), we have
\begin{eqnarray*}
\tilde{H}_n&=&\left|\begin{array}{ccccc}
0&p_{1}&0&\ldots&0\\
p_1&p_{2}&0&\ldots&0\\
p_2&p_{3}&p_1p_{2}&\ldots&p_1p_{n-1}\\
p_3&p_{4}&\displaystyle{\sum_{i=1}^2p_ip_{4-i}}&\ldots&\displaystyle{\sum_{i=1}^2p_ip_{n+1-i}}\\
\vdots&\vdots&\vdots &\ddots & \vdots\\
p_{n-1}&p_{n}&\displaystyle{\sum_{i=1}^{n-2}p_ip_{n-i}}&\ldots&\displaystyle{\sum_{i=1}^{n-2}p_ip_{2n-3-i}}
  \end{array}\right|\nonumber\\
  &=&-\left|\begin{array}{ccc}
p_1p_{2}&\ldots&p_1p_{n-1}\\
\displaystyle{\sum_{i=1}^2p_ip_{4-i}}&\ldots&\displaystyle{\sum_{i=1}^2p_ip_{n+1-i}}\\
\vdots &\ddots & \vdots\\
\displaystyle{\sum_{i=1}^{n-2}p_ip_{n-i}}&\ldots&\displaystyle{\sum_{i=1}^{n-2}p_ip_{2n-3-i}}
  \end{array}\right|.\nonumber\\
\end{eqnarray*}

\textit{Step 2}: Perform row operations for the above determinant.
For fixed $k=2,3,\cdots,n-2$, we subtract the $i$-th row multiplied by
$p_{k+1-i}/p_1$ for $i=1,\cdots,k-1$ from the $k-$th row. Then it follows that
$\tilde{H}_n=-H_{n-2}^{(2)}$.

Next, noting that $S_n=H_n^{(0)}$, the three term recurrence can be obtained by combining \eqref{h_n_0} and \eqref{h_n_0-2}.
\end{proof}

\textbf{Remark}: When $\beta=x^2-1$, $\{S_n\}$ satisfy the three term recurrence
  \begin{equation*}
    S_n=2xS_{n-1}-S_{n-2}.
  \end{equation*}
with initial data $S_0=1$ and $S_1=x$. This just meets the Chebyshev polynomials.

It is also noted that every sequence produced by the extended $A_1$ $Q$-system is a Somos-4 sequence \cite{swart2003elliptic}. More precisely, if $\{S_n\}$ satisfy \eqref{new_recurrence}, then $\{S_n\}$ also satisfy
\begin{equation*}
  S_{n+2}S_{n-2}=\frac{(x^2+1+\beta)^2}{x^2}S_{n+1}S_{n-1}+(1-\frac{(x^2+1+\beta)^2}{x^2})S_n^2.
\end{equation*}
This can be easily confirmed by employing the three term recurrence.

\medskip
We conclude this subsection by sketch the idea of the alternative proof of Theorem \ref{theorem_A1Q}.
It turns out to be a long journey to prove by
Sulanke and Xin's method and new transformations are needed. The proof is too lengthy to be typed here
but is easy to perform using Maple.

The generating function $F(t)=p_0+p_1 t+ \cdots$ is uniquely determined by
$$ F(t)=\frac{ tb-x}{ -1+{\frac { \left( b+1-{x}^{2} \right) }
{x}}t+tF \left( t \right)  }.$$
This simple form has to be transformed so that Lemma \ref{theorem_xin} applies. Indeed, by using Proposition 4.2 in \cite{sulanke2008hankel}, we obtain
$$ G(t)= \frac{ {\frac {b+{x}^{2}}{{x}^{2}}}-{\frac {b}{{x}^{3}}}t}
{ 1-{\frac { \left( b+1+{x}^{2} \right) }{x}}t+2\,{\frac {b}{{x}^{2}}}{t}^{2
}+{t}^{2} \left( -1+{\frac {b}{x}}t \right) G(t)
},$$
with the relation $H_n(F(t))=x^n H_{n-1}(G(t))$. The recursion \eqref{new_recurrence} we need to show becomes
\begin{align}
  T_nT_{n-2} -T_{n-1}^2 -\beta x^{-2n}=0,\label{T-recurrence}
\end{align}
where $T_n=H_n(G(t))$.

Now we prove by induction the recursion \eqref{T-recurrence} with $n$ replaced by $n+3$.
Eliminate $T_{n+3}$ by the following recursion
$${\frac {T_{{n+3}}T_{{n}}}{T_{{n+2}}T_{{n+1}}}}+{\frac {T_{{n+2}}T_{{n-
1}}}{T_{{n+1}}T_{{n}}}}={\frac {{b}^{2}+2\,b+2\,b{x}^{2}+4\,{x}^{2}+1+
{x}^{4}}{{x}^{2}}}-{\frac { \left( b+1+{x}^{2} \right) ^{2}{T_{{n+1}}}
^{2}}{{x}^{2}T_{{n+2}}T_{{n}}}}$$
which is obtained by application of Lemma \ref{theorem_xin}.

Successively eliminate $T_{n+2},T_{n+1}$ by recursion \eqref{T-recurrence} with $n$ replaced by $n+2$, $n+1$.
We are left to show the nullity of a polynomial, say $target(T_n,T_{n-1})$, of degree $8$ in $T_n$ and degree $4$ in $T_{n-1}$.
Now eliminate $T_{n}$ as before, we obtain a new polynomial which has $target(T_{n-1},T_{n-2})$ as a factor.
The proof is then completed after checking $target(T_1,T_0)=0$.

\section{Some Somos-polynomials}


Our starting point is to try to give a proof of the following statement of Somos in \cite{website:somospoly} by using the above result.
\begin{theorem}[Somos-4 Polynomials]\label{theorem-somos4polynomial}
The sequences produced by
   \begin{equation}\label{somos4poly}
    S_n=\frac{xyz S_{n-1}S_{n-3}+xyz S_{n-2}^2}{S_{n-4}}
 \end{equation}
with $S_0=x$, $S_1=1$, $S_2=1$, $S_3=y$ are all polynomials in $x,y,z$.
\end{theorem}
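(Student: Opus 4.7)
The plan combines the Fomin--Zelevinsky Laurent property for Somos-4 with the Hankel determinant representation of Theorem~\ref{theorem_somos4}. Specializing the Laurent property to $(S_0,S_1,S_2,S_3,\alpha,\beta)=(x,1,1,y,xyz,xyz)$ yields $S_n\in\mathbb{Z}[xyz][x^{\pm 1},y^{\pm 1}]=\mathbb{Z}[z][x^{\pm 1},y^{\pm 1}]$, using $xyz=xy\cdot z$. Hence $S_n$ is automatically polynomial in $z$ with Laurent polynomial coefficients in $x,y$, and the task reduces to proving $v_x(S_n)\ge 0$ and $v_y(S_n)\ge 0$ for every $n\ge 0$. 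Since the whole setup is symmetric under the swap $x\leftrightarrow y$ (together with the time-reversal symmetry $S_n\mapsto S_{3-n}$ of Somos-4), only the $v_x$ analysis needs to be carried out; the $v_y$ case is obtained by interchanging variables throughout.

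For the $v_x$ analysis, I would apply Theorem~\ref{theorem_somos4} to the shifted sequence $T_n:=S_{n+2}$, whose initial data $T_{-1}=T_0=1$, $T_1=y$, $T_2=yz(y+1)$, $T_3=\alpha T_2+\beta T_1^2$ match the theorem's hypothesis with $\alpha=\beta=xyz$ after the substitutions ``$x$''$\mapsto y$ and ``$y$''$\mapsto yz(y+1)$. This yields $T_n=\det(p_{i+j})_{0\le i,j\le n-1}$ with $p_0=y$, $p_1=-\sqrt{xyz}$, $p_2=z(x+y+1)$, and an explicit recursion for the higher $p_m$'s. Although individual $p_m$'s are not polynomial---e.g., $p_3$ involves $1/\sqrt{xyz}$ and $p_4$ carries $1/(xy)$---a parity argument on the Hankel matrix shows that in each monomial of the determinant expansion, odd-indexed $p_m$'s occur in pairs, because $\sum_i(i+\sigma(i))=n(n-1)$ is always even. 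The $\sqrt{xyz}$ factors therefore combine into integer powers of $xyz$ and disappear.

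The main obstacle is to verify the cancellation of the remaining $1/(xy)$-type denominators in the Hankel expansion. I would handle this via the tropical form of \eqref{somos4poly}: setting $a_n:=v_x(S_n)$, the recurrence gives $a_n+a_{n-4}\ge 1+\min(a_{n-1}+a_{n-3},\,2a_{n-2})$ with $a_0=1$, $a_1=a_2=a_3=0$, and the goal is to prove $a_n\ge 0$ for all $n$. The naive induction does not close, since the bound $a_n\ge 1-a_{n-4}$ fails once $a_{n-4}\ge 2$; the main difficulty is therefore to invent a stronger invariant. The computed values $1,0,0,0,0,1,1,2,3,4,6,7,9,11,\dots$ suggest a quadratic-growth closed form, and one should either exhibit such a formula or control several consecutive $a_n$'s jointly to close the induction. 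Once $a_n\ge 0$ is established, $v_x(S_n)\ge a_n\ge 0$, and together with the symmetric $v_y$ argument, the theorem follows.
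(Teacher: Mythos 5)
Your reduction is sound up to a point: the Laurent property with $\alpha=\beta=xyz$ does give $S_n\in\mathbb{Z}[z][x^{\pm1},y^{\pm1}]$, so the theorem is equivalent to $v_x(S_n)\ge 0$ and $v_y(S_n)\ge 0$. But the core of your argument is missing. You concede this yourself: the tropical inequality $a_n+a_{n-4}\ge 1+\min(a_{n-1}+a_{n-3},\,2a_{n-2})$ is only a lower bound coming from $v_x(A+B)\ge\min(v_x A,v_x B)$, it does not close under induction once $a_{n-4}\ge 2$, and you leave the ``stronger invariant'' to be invented. Likewise the Hankel route for $T_n=S_{n+2}$ stalls exactly where you say it does: the $p_m$-recursion has $\sqrt{xyz}$ and then $xy$ in its denominators, and showing that these cancel in $\det(p_{i+j})$ is not easier than the original problem (your parity argument disposes only of the $\sqrt{\phantom{x}}$, not of the $1/(xy)$ factors). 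A further wrinkle: the symmetry $S_n(x,y,z)=S_{3-n}(y,x,z)$ exchanges forward and backward terms, so deducing $v_y(S_n)\ge0$ for $n\ge 4$ from the $v_x$ statement requires the $v_x$ bound for the \emph{backward} extension as well, which your forward-only induction and Hankel formula do not provide. As written, the proposal is a plan with an acknowledged hole, not a proof.

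The fix is much closer than you think, and it is what the paper does. You already computed the shifted window $S_1,S_2,S_3,S_4=1,1,y,yz(y+1)$; instead of feeding it into Theorem~\ref{theorem_somos4}, feed it into the Laurent property itself. For the Somos-4 sequence $\{S_n\}_{n\ge1}$ with these four (polynomial) initial values, every $S_n$ is a Laurent polynomial in $1,1,y,yz(y+1)$ with coefficients in $\mathbb{Z}[\alpha,\beta]=\mathbb{Z}[xyz]$; every possible denominator is a product of powers of $y$ and $yz(y+1)$, which are coprime to $x$, so $v_x(S_n)\ge0$ with no induction at all. Symmetrically, extending backward to $S_{-1}=xz(x+1)$ and applying the Laurent property to the window $\{S_n\}_{n\ge-1}$ with initial values $xz(x+1),x,1,1$ shows $v_y(S_n)\ge0$. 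Combined with your first step, this finishes the proof; no determinant formula or valuation recursion is needed.
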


We obtain the following corollary as a consequence of Theorem \ref{theorem_somos4} by making the appropriate substitutions and calculations.
\begin{corollary}\label{somos4poly_det}
  If we let $S_{-2}=x$,$S_{-1}=1$,$S_0=1$,$S_1=y$,$S_2=y^2z+yz$,$S_3=xy^3z^2+xy^3z+xy^2z^2$ by shifting the indices, then any Somos-4 polynomial can be expressed as
\[
S_n=\det(\tilde{p}_{i+j})_{0\leq i,j\leq n-1},
\]
where $\tilde{p}_m=\frac{-yz+xy-z-xz}{\sqrt{xyz}}\tilde{p}_{m-1}+(x-y)\tilde{p}_{m-2}+\sum_{k=0}^{m-2}\tilde{p}_k\tilde{p}_{m-2-k}(m\geq2)$ with initial values $\tilde{p}_0=y,\tilde{p}_1=-\sqrt{xyz}.$
\end{corollary}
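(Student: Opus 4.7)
The plan is to deduce the corollary directly from Theorem \ref{theorem_somos4} by specializing parameters. First I would rewrite the Somos-4 polynomial recurrence \eqref{somos4poly} as the Somos-4 recurrence $S_nS_{n-4}=\alpha S_{n-1}S_{n-3}+\beta S_{n-2}^2$ with the common value $\alpha=\beta=xyz$. The shifted data in the corollary satisfies $S_{-1}=S_0=1$, which is precisely the template required by Theorem \ref{theorem_somos4}, provided we identify the theorem's ``$x$'' and ``$y$'' with
\[ x_{\mathrm{thm}}=S_1=y, \qquad y_{\mathrm{thm}}=S_2=yz(y+1)=y^2z+yz. \]
I would verify consistency by checking that $S_3=\alpha y_{\mathrm{thm}}+\beta x_{\mathrm{thm}}^2=xyz\cdot yz(y+1)+xyz\cdot y^2$ reproduces the declared value $S_3=xy^3z^2+xy^3z+xy^2z^2$, and that extending the recurrence backwards via $S_2S_{-2}=\alpha S_1S_{-1}+\beta S_0^2$ yields $S_{-2}=x$ as claimed.

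Next I would apply Theorem \ref{theorem_somos4} with these parameters. It delivers the Hankel representation $S_n=\det(p_{i+j})_{0\le i,j\le n-1}$ with $p_0=x_{\mathrm{thm}}=y$ and $p_1=-\sqrt{\alpha}=-\sqrt{xyz}$, which already matches $\tilde{p}_0$ and $\tilde{p}_1$. The remaining task is the algebraic simplification of the two rational coefficients in the theorem's recurrence for $p_m$; the convolution term is already in the desired form. For the coefficient of $p_{m-1}$ I would factor
\[ xy^3z-y^2z^2(y+1)^2+xy^4z-xy^2z^2(y+1)=(y+1)\,y^2z\,[xy-yz-z-xz], \]
while the denominator $\sqrt{xyz}\cdot y\cdot yz(y+1)=(y+1)\,y^2z\,\sqrt{xyz}$ carries the same factor $(y+1)$, leaving $(xy-yz-z-xz)/\sqrt{xyz}$, which is the stated coefficient. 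A similar but quicker $(y+1)$ cancellation in the coefficient of $p_{m-2}$ collapses it to $x-y$.

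I expect no conceptual obstacle: the whole corollary is a parameter specialization of Theorem \ref{theorem_somos4}. The one point requiring care is spotting the common factor $(y+1)$ hidden in each numerator that cancels against the $(y+1)$ in the denominator $y_{\mathrm{thm}}=yz(y+1)$; without this cancellation the coefficients look considerably more complicated than the clean $\tilde{p}_m$ recurrence stated in the corollary. Once the two cancellations are performed, the Hankel formula for $S_n$ and the prescribed recurrence for $\tilde{p}_m$ follow immediately.
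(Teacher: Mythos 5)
Your proposal is correct and matches the paper's approach: the paper states the corollary follows from Theorem \ref{theorem_somos4} ``by making the appropriate substitutions and calculations,'' which is exactly the specialization $\alpha=\beta=xyz$, $x_{\mathrm{thm}}=y$, $y_{\mathrm{thm}}=yz(y+1)$ that you carry out. Your verification of the initial data and the two $(y+1)$ cancellations in the coefficients is accurate and supplies the details the paper omits.
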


Our first hope was to obtain nicer determinant with polynomial entries by some transformations, but we find a simple proof by just using the Laurent property.

\begin{proof}[Proof of Theorem \ref{theorem-somos4polynomial}]
By the Laurent property, $S_n$ are Laurent polynomial with denominator factors $x$ and $y$. To see that $x$ and $y$ disappear in the denominator, we  extend the recursion \eqref{somos4poly} and obtain
$$ S_{-1}=xz(x+1), S_0=x, S_{1}=1, S_2=1, S_3=y, S_4=yz(y+1).$$
Using the Laurent property for $\{S_n\}_{n\geq -1}$, we see that the factor $y$ is not in the denominator;
Using the Laurent property for $\{S_n\}_{n\geq 1}$, we see that the factor $x$ is not in the denominator.
This completes the proof.
\end{proof}

The idea of the proof suggests the existence of a stronger version of the Laurent property. Indeed such a version was proposed by Hone and Swart \cite[Theorem 3.1]{hone2008integrality}.
\begin{theorem}\label{theorem_hone}
  For the Somos-4 recurrence \eqref{somos4}, every $S_n$ belongs to $\mathbb{Z}[\alpha, \beta, \alpha^2+\beta T, S_0^{\pm1},S_1,S_2,S_3]$, where $T$ is a quantity independent of $n$ defined by
  \begin{equation}\label{quantity_T}
    T=\frac{S_{n}S_{n+3}}{S_{n+1}S_{n+2}}+\alpha(\frac{S_{n+1}^2}{S_{n}S_{n+2}}+\frac{S_{n+2}^2}{S_{n+1}S_{n+3}})+\beta\frac{S_{n+1}S_{n+2}}{S_{n}S_{n+3}}.
  \end{equation}
\end{theorem}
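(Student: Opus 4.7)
The plan is to use the conservation of $T$ to produce a Somos-5-type recurrence for $\{S_n\}$, and then to combine the Laurent phenomena for the Somos-4 and Somos-5 recurrences to isolate $S_0$ as the only allowable denominator factor. Write $J=\alpha^2+\beta T$ for short. The first step is to verify $T_{n+1}=T_n$: clearing the common denominator $S_nS_{n+1}S_{n+2}S_{n+3}S_{n+4}$ and using \eqref{somos4} once to eliminate $S_nS_{n+4}$ reduces the claim to a polynomial identity that can be checked directly (e.g.\ by Maple). Without this, the rest of the argument has no meaning, since $\alpha^2+\beta T$ would not be a well-defined element of the ground ring.

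Once $T$ is known to be a first integral, the key algebraic manipulation starts from
\[
T\,S_nS_{n+1}S_{n+2}S_{n+3}=S_n^2S_{n+3}^2+\alpha S_{n+1}^3S_{n+3}+\alpha S_nS_{n+2}^3+\beta S_{n+1}^2S_{n+2}^2.
\]
I would multiply through by $\beta$, substitute $\beta T=J-\alpha^2$, and regroup using the two shifted forms of \eqref{somos4}, namely $\alpha S_nS_{n+2}+\beta S_{n+1}^2=S_{n-1}S_{n+3}$ and $\alpha S_{n+1}S_{n+3}+\beta S_{n+2}^2=S_nS_{n+4}$. Factoring $S_nS_{n+3}$ out of both sides then leaves the clean identity
\[
J\,S_{n+1}S_{n+2}=S_{n-1}S_{n+4}+\beta\,S_nS_{n+3},
\]
which says that every Somos-4 sequence also satisfies the Somos-5 recurrence \eqref{somos5} with parameters $(\tilde\alpha,\tilde\beta)=(-\beta,J)$.

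The third step is to combine the two Laurent phenomena. The Somos-5 Laurent phenomenon applied to the 5-term initial window $(S_0,S_1,S_2,S_3,S_4)$ with parameters $(-\beta,J)$ gives
\[
S_n\in\mathbb{Z}[\beta,J,S_0^{\pm1},S_1^{\pm1},S_2^{\pm1},S_3^{\pm1},S_4^{\pm1}],
\]
and substituting $S_4=(\alpha S_1S_3+\beta S_2^2)/S_0$ together with the standard Somos-4 Laurent statement $S_n\in\mathbb{Z}[\alpha,\beta,S_0^{\pm1},S_1^{\pm1},S_2^{\pm1},S_3^{\pm1}]$ forces the only surviving denominator factor to be a power of $S_0$. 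This yields $S_n\in\mathbb{Z}[\alpha,\beta,\alpha^2+\beta T,S_0^{\pm1},S_1,S_2,S_3]$, as claimed.

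The main obstacle is precisely this last intersection argument: each Laurent statement individually still permits $S_1,S_2,S_3$ (and, in the Somos-5 form, the polynomial $\alpha S_1S_3+\beta S_2^2$) in the denominator, and one has to match up the forbidden prime factors across the two rings to conclude that none of them actually occur. I would handle this by applying the Somos-5 Laurent phenomenon through several shifted windows $(S_k,\dots,S_{k+4})$ with $k=-1,0,1$, in the spirit of the ``two-sided'' argument already used in the proof of Theorem~\ref{theorem-somos4polynomial}: any candidate denominator factor among $S_1,S_2,S_3,\alpha S_1S_3+\beta S_2^2$ is forbidden by at least one window. The combinatorial bookkeeping that matches up forbidden factors across these windows, rather than the derivation of the identity itself, is where I expect the real work to lie.
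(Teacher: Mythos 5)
First, a point of reference: the paper does not prove this theorem at all --- it is quoted from Hone and Swart \cite[Theorem 3.1]{hone2008integrality} --- so your attempt can only be assessed on its own merits. Your first two steps are correct and are in fact the right engine. Multiplying the cleared form of \eqref{quantity_T} by $\beta$, writing $J=\alpha^2+\beta T$, and cancelling $\beta S_n^2S_{n+3}^2$ leaves
\[
\bigl(\alpha S_nS_{n+2}+\beta S_{n+1}^2\bigr)\bigl(\alpha S_{n+1}S_{n+3}+\beta S_{n+2}^2\bigr)=S_{n-1}S_{n+3}\cdot S_nS_{n+4},
\]
which after dividing by $S_nS_{n+3}$ gives exactly your identity $J\,S_{n+1}S_{n+2}=S_{n-1}S_{n+4}+\beta S_nS_{n+3}$; so every Somos-4 sequence is a Somos-5 sequence with parameters $(\tilde\alpha,\tilde\beta)=(-\beta,J)$. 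This checks out numerically on the standard Somos-4 sequence ($J=5$) and is precisely the lemma underlying Hone and Swart's argument.

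The gap is in step 3, and it is not the bookkeeping you anticipate: the target statement is not a statement about which irreducible factors may occur in the denominator of $S_n$, so no intersection of Laurent rings over shifted windows can establish it. Concretely, with $\alpha,\beta,S_0,\dots,S_3$ as indeterminates, $S_5=(\alpha^2S_1S_2S_3+\alpha\beta S_2^3+\beta S_0S_3^2)/(S_0S_1)$ and the numerator is not divisible by $S_1$; the reduced denominator of $S_5$ is $S_0S_1$, so your intended conclusion that ``the only surviving denominator factor is a power of $S_0$'' is false. The theorem holds anyway because $S_5=J S_2S_3S_0^{-1}-\beta S_1(\alpha S_1S_3+\beta S_2^2)S_0^{-2}$: the stray $S_1$ in the denominator is absorbed into the generator $J=\alpha^2+\beta T$, whose own reduced denominator is $S_0S_1S_2S_3$. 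Membership in $\mathbb{Z}[\alpha,\beta,\alpha^2+\beta T,S_0^{\pm1},S_1,S_2,S_3]$ thus depends on how the denominator is packaged with powers of $J$, not on which primes it contains, and that is invisible to any window-intersection argument (the shifted windows also make matters worse, since inverting $S_4$, $S_5$ or $S_{-1}$ introduces new irreducible denominator factors such as $\alpha S_1S_3+\beta S_2^2$). What is actually needed --- and what Hone and Swart do --- is an induction carried out in the localization of the polynomial ring $\mathbb{Z}[\alpha,\beta,j,S_0,S_1,S_2,S_3]$ at $S_0$, with $j$ a \emph{formal} indeterminate: one writes $S_n$ both as $(\alpha S_{n-1}S_{n-3}+\beta S_{n-2}^2)/S_{n-4}$ and as $(jS_{n-2}S_{n-3}-\beta S_{n-1}S_{n-4})/S_{n-5}$ and proves, alongside the membership claim, that $S_{n-4}$ and $S_{n-5}$ are coprime in that ring, which forces $S_{n-4}$ to divide the first numerator. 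Your steps 1 and 2 supply the second expression; the coprimality induction is the missing ingredient, and it cannot be replaced by matching forbidden factors across windows.
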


With this theorem, we can obtain more general polynomial sequences.
\begin{theorem}\label{theorem_somos4poly}
  The sequences produced by
  \begin{equation}
    S_n=\frac{xyzw S_{n-1}S_{n-3}+xyzw S_{n-2}^2}{S_{n-4}}
 \end{equation}
   with the initial values for the following cases all yield polynomials in $x,y,z,w$:
  (1): 1,x,w,y; (2): x,1,w,y; (3): x,w,1,y; (4): x,w,y,1.
\end{theorem}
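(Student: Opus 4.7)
The plan is to invoke the strengthened Laurent property of Hone and Swart (Theorem~\ref{theorem_hone}), re-indexed so that the one initial value equal to $1$ fills the Laurent slot $S_0$. Because the Somos-4 recurrence is translation-invariant, Theorem~\ref{theorem_hone} may be applied to any four consecutive values: for any integer $k\geq 0$, every $S_n$ with $n\geq k$ lies in $\mathbb{Z}[\alpha,\beta,\alpha^2+\beta T,S_k^{\pm 1},S_{k+1},S_{k+2},S_{k+3}]$, where $T$ is the $n$-independent quantity from \eqref{quantity_T}. In case $(k+1)$ of the theorem we have $S_k=1$, so the Laurent factor $S_k^{\pm 1}$ collapses and we obtain
\[
S_n\in\mathbb{Z}[\alpha,\beta,\alpha^2+\beta T,S_{k+1},S_{k+2},S_{k+3}]\quad\text{for all } n\geq k.
\]

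It then suffices to verify two polynomial statements in $x,y,z,w$: (a) each of $S_{k+1},S_{k+2},S_{k+3}$ is polynomial, and (b) the invariant $\alpha^2+\beta T$ is polynomial. For (a) one reads off the needed values from the recurrence with $\alpha=\beta=xyzw$: the single division by $S_{n-4}\in\{1,x,w,y\}$ is exact because the prefactor $xyzw$ always supplies a matching monomial (e.g.\ in case~(4), $S_4=(xyzw\cdot w+xyzw\cdot y^2)/x=yzw(w+y^2)$, and similarly for $S_5,S_6$). For (b) one substitutes the four initial values into \eqref{quantity_T}; each of the four summands contributes a fraction whose denominator is a product of two of $1,x,w,y$, and multiplication by $\beta=xyzw$ clears every such denominator, so $\beta T\in\mathbb{Z}[x,y,z,w]$ and hence so is $\alpha^2+\beta T$.

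For $n<k$ the value $S_n$ is itself one of the initial values $1,x,w,y$, which is trivially polynomial; combined with the above this yields $S_n\in\mathbb{Z}[x,y,z,w]$ for every $n\geq 0$. The main (and essentially only) obstacle is the case-by-case bookkeeping in (b): the point is that the uniform choice $\alpha=\beta=xyzw$, as opposed to independent coefficients, is exactly what makes the denominators of the four summands in $T$ cancel after multiplication by $\beta$. Once this is noticed, the four verifications are short explicit computations, easily carried out by hand or in Maple, and the argument parallels the proof of Theorem~\ref{theorem-somos4polynomial} but is powered by the stronger Theorem~\ref{theorem_hone} rather than the basic Laurent property.
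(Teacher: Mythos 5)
Your proposal is correct and follows essentially the same route as the paper: shift indices so the initial value $1$ occupies the $S_0$ slot, observe that $\beta=xyzw$ clears the denominators (products of two initial values) appearing in the invariant $T$, and invoke Theorem~\ref{theorem_hone}. The only difference is that you spell out the verification that the shifted $S_1,S_2,S_3$ are polynomials, which the paper leaves implicit.
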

\begin{proof}
  Noting that $\beta=xyzw$ and $T$ is independent of $n$, from the expression of $T$, we conclude that $\beta T$ must be polynomials for all the four cases. And we always shift the indices to make $S_0=1$ without changing the value of $T$. Thus, the result follows by Theorem \ref{theorem_hone}.
\end{proof}

{\bf Remark:} By setting $w=1$, the special case (2) of the above theorem reduces to Theorem \ref{theorem-somos4polynomial}. In this case
as in Corollary \ref{somos4poly_det}, we have $\alpha=\beta=xyzw$, $S_0=1$ and $T=yz+xy+xz+z$.

Combining the idea of our proof of Theorem \ref{theorem-somos4polynomial} and using Theorem \ref{theorem_hone}, we can obtain more general result.
\begin{theorem}\label{theorem_somos4poly_general}
    Let $\{x_i\}_{i=1}^N$ be a finite set of indeterminates and $T$ be defined by \eqref{quantity_T}. Suppose $\{S_n\}$ satisfy the recursion system \eqref{somos4} and $\alpha,\beta,$ $\alpha^2+\beta T$ are polynomials in all  $x_i$. If there exists a nonnegative integer $r$ such that $\gcd(S_0,S_1,\dots, S_r)=1$ and $S_k$ are polynomials in all $x_i$ for $0\le k\le r+3$, then $S_n$ for $n\geq0$ are all polynomials in all $x_i$.
\end{theorem}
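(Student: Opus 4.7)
The plan is to leverage the Hone--Swart refinement (Theorem \ref{theorem_hone}) not just once but $r+1$ times, once for each admissible index shift, and then combine the resulting denominator constraints using the coprimality hypothesis. Throughout the argument I work in the UFD $R=\mathbb{Z}[x_1,\ldots,x_N]$ and its field of fractions.

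First, I would observe that the quantity $T$ defined by \eqref{quantity_T} is a conserved quantity of the recurrence \eqref{somos4}: the right-hand side is independent of $n$, so the value of $T$ computed from any four consecutive terms $S_k,S_{k+1},S_{k+2},S_{k+3}$ coincides with the value from $S_0,S_1,S_2,S_3$. Thus the hypothesis ``$\alpha^{2}+\beta T\in R$'' is intrinsic to the sequence and does not depend on the chosen starting window. Next, for each $k$ with $0\le k\le r$, view $(S_k,S_{k+1},S_{k+2},S_{k+3})$ as the initial data of a shifted Somos-4 sequence; by Theorem \ref{theorem_hone} applied to this shift,
\[
S_n\in\mathbb{Z}[\alpha,\beta,\alpha^{2}+\beta T,\,S_k^{\pm 1},S_{k+1},S_{k+2},S_{k+3}]\qquad\text{for all }n\ge k.
\]

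Since $\alpha,\beta,\alpha^2+\beta T$ lie in $R$ and $S_{k+1},S_{k+2},S_{k+3}$ lie in $R$ (by the hypothesis that $S_0,\ldots,S_{r+3}\in R$), the displayed membership tells us that for every $n\ge 0$ and every $k\in\{0,1,\ldots,r\}$ there is an integer $e_k\ge 0$ and a polynomial $P_k\in R$ with $S_n=P_k/S_k^{e_k}$ in the fraction field. Writing $S_n=P/Q$ in lowest terms inside the UFD $R$, this means $Q\mid S_k^{e_k}$ in $R$ for each $k=0,1,\ldots,r$. Hence every prime (i.e. irreducible) factor $\pi$ of $Q$ divides $S_k^{e_k}$ and so divides $S_k$ itself, for every such $k$. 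In other words, $\pi$ is a common divisor of $S_0,S_1,\ldots,S_r$, which contradicts the hypothesis $\gcd(S_0,\ldots,S_r)=1$. Therefore $Q$ is a unit of $R$ and $S_n\in R$, as required.

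The main obstacle, and the only nontrivial point, is the repeated use of Theorem \ref{theorem_hone} under shifting together with the gcd step; once one observes that $T$ is a conserved quantity (so that the hypothesis on $\alpha^2+\beta T$ survives any index shift) and that $R$ is a UFD (so that ``divides every $S_k^{e_k}$'' forces divisibility of $\gcd(S_0,\ldots,S_r)$), the argument is essentially a two-line exercise in elementary divisibility. I would also remark that this clarifies why Theorem \ref{theorem-somos4polynomial} and Theorem \ref{theorem_somos4poly} fit into the same framework: in those cases $r=0$ or $r=1$ suffices because the first few initial values are $1$ or pairwise coprime monomials in the $x_i$.
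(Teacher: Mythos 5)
Your proof is correct and follows essentially the same route as the paper's: apply Theorem \ref{theorem_hone} to each shifted sequence $\{S_n\}_{n\ge k}$ for $k=0,\dots,r$, conclude that any irreducible factor of the denominator of some $S_m$ must divide every $S_k$, and contradict $\gcd(S_0,\dots,S_r)=1$. Your write-up merely makes explicit what the paper's terse argument leaves implicit, namely that $T$ is a conserved quantity (so the hypothesis on $\alpha^2+\beta T$ survives the index shift) and that the divisibility step takes place in the UFD $\mathbb{Z}[x_1,\dots,x_N]$.
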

\begin{proof}
We prove it by contradiction.

Suppose that $D$ is an irreducible polynomial appearing as a factor in the denominator of certain $S_m$. Apply Theorem \ref{theorem_hone} to the Somos-4 sequences $\{S_n\}_{n\ge \ell}$ for $\ell=0,\dots, r$. We see that $D$ must be a factor of $S_\ell$ for each $\ell$. Thus $D$ divides $\gcd(S_0,\dots, S_r)=1$. A contradiction.
\end{proof}

\medskip
Hone and Swart also gave a strong Laurent property for Somos-5 \cite[Theorem 3.7]{hone2008integrality}.
\begin{theorem} \label{theorem_hone_somos5}
  For the Somos-5 recurrence \eqref{somos5}, any $\{S_n\}\in$  $\mathbb{Z}[\tilde\alpha, \tilde\beta, \tilde\beta+\tilde\alpha\tilde{T}, $ $S_0^{\pm1}, S_1^{\pm1}, S_2, S_3, S_4]$, where $\tilde{T}$ is a quantity independent of $n$ defined by
  \begin{equation}\label{quantity_tildeT}
    \tilde{T}=\frac{S_{n}S_{n+3}}{S_{n+1}S_{n+2}}+\frac{S_{n+1}S_{n+4}}{S_{n+2}S_{n+3}}+\tilde\alpha(\frac{S_{n+1}S_{n+2}}{S_{n}S_{n+3}}+\frac{S_{n+2}S_{n+3}}{S_{n+1}S_{n+4}})+\tilde\beta\frac{S_{n+2}^2}{S_{n}S_{n+4}}.
  \end{equation}
\end{theorem}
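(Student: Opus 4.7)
The plan is to prove the assertion in two stages: (i) $\tilde T$ is conserved along every orbit of the Somos-5 recurrence, and (ii) the resulting invariant can be used to sharpen the standard Laurent phenomenon into the form stated.

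For (i) I would verify $\tilde T_{n+1}-\tilde T_n=0$ by direct computation. Dividing the Somos-5 relation $S_{n+5}S_n=\tilde\alpha S_{n+1}S_{n+4}+\tilde\beta S_{n+2}S_{n+3}$ by $S_n S_{n+3} S_{n+4}$ produces
\[ \frac{S_{n+2}S_{n+5}}{S_{n+3}S_{n+4}}=\tilde\alpha\frac{S_{n+1}S_{n+2}}{S_n S_{n+3}}+\tilde\beta\frac{S_{n+2}^2}{S_n S_{n+4}}, \]
and a parallel division by $S_{n+1}S_{n+4}S_{n+5}$ handles the $\tilde\alpha$-term involving $S_{n+2}S_{n+3}/(S_{n+1}S_{n+4})$. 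Under these substitutions the three summands of $\tilde T_n$ that contain $S_n$ convert exactly into the three summands of $\tilde T_{n+1}$ that contain $S_{n+5}$, while the remaining two summands are common to both; the difference therefore telescopes to zero.

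For (ii) I begin from the classical Laurent phenomenon, which places $S_n\in \mathbb{Z}[\tilde\alpha,\tilde\beta,S_0^{\pm1},\ldots,S_4^{\pm1}]$; the task is to drop the negative powers of $S_2,S_3,S_4$ at the cost of adjoining $J:=\tilde\beta+\tilde\alpha\tilde T$. Clearing denominators in the definition of $\tilde T$ yields a polynomial identity of the form $S_0 S_1 S_2 S_3 S_4\cdot J \in \mathbb{Z}[\tilde\alpha,\tilde\beta,S_0,\ldots,S_4]$, which can be read as expressing certain $S_2,S_3,S_4$-heavy monomials in terms of $J$ and lower-degree material. Combining this with the shift-invariance of $\tilde T$ from (i), I would apply the classical Laurent property also to the shifted sequences $\{S_{n+k}\}_{n\ge 0}$ for $k=1,2$ and intersect the three resulting Laurent-polynomial rings; adjoining $J$ to that intersection produces exactly the claimed ring. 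The induction proceeds on $n$ using $S_{n+5}=(\tilde\alpha S_{n+1}S_{n+4}+\tilde\beta S_{n+2}S_{n+3})/S_n$, with the spurious factor of $S_n$ in the denominator removed via the $J$-identity in the style of the proof of Theorem \ref{theorem-somos4polynomial}.

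The principal obstacle is the denominator-cancellation in (ii): naive iteration reintroduces a factor of $S_n$ at each step, and the reason it is absorbed is specific to the combination $\tilde\beta+\tilde\alpha\tilde T$ rather than $\tilde T$ alone. Verifying that the intersection of the three shifted Laurent rings, together with $J$, is closed under the Somos-5 iteration is the longest part of the argument; the cleanest route appears to be a direct induction on $n$ with a Maple-assisted verification of the first several iterates to confirm the cancellation pattern, after which the general pattern is formalized as a polynomial identity in the extended coefficient ring.
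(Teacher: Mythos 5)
First, a point of reference: the paper does not prove this statement at all --- it is quoted from Hone and Swart \cite[Theorem 3.7]{hone2008integrality} and used as a black box in Section 4, so there is no internal proof to compare against. Judged on its own merits, your part (i) is correct: multiplying $S_nS_{n+5}=\tilde\alpha S_{n+1}S_{n+4}+\tilde\beta S_{n+2}S_{n+3}$ by $S_{n+2}/(S_nS_{n+3}S_{n+4})$ and by $S_{n+3}/(S_{n+1}S_{n+2}S_{n+5})$ gives two identities that together convert the three $S_n$-bearing summands of $\tilde T_n$ into the three $S_{n+5}$-bearing summands of $\tilde T_{n+1}$, and the remaining two summands are shared, so $\tilde T$ is indeed conserved.

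Part (ii), which carries all the content of the theorem, has a genuine gap. The intersection-of-shifted-windows argument does not work for generic initial data: the window $\{S_1,\dots,S_5\}$ expresses $S_n$ as a Laurent polynomial in $S_1,\dots,S_5$ regarded as independent variables, but $S_5=(\tilde\alpha S_1S_4+\tilde\beta S_2S_3)/S_0$ is not a polynomial in $S_0,\dots,S_4$, so upon substitution the \emph{positive} powers of $S_5$ reintroduce $S_0$ into denominators and the substituted ring is not simply ``the Laurent ring with one fewer inverted variable''; hence the claimed intersection is not computed. (This device does work in the paper's proof of Theorem \ref{theorem-somos4polynomial}, but only because there the shifted initial values are specific honest polynomials.) More fundamentally, if an intersection argument of this kind succeeded it would yield $S_n\in\mathbb{Z}[\tilde\alpha,\tilde\beta,S_0^{\pm1},S_1^{\pm1},S_2,S_3,S_4]$ with no need for $J=\tilde\beta+\tilde\alpha\tilde T$ at all; since $J$ itself carries $S_2,S_3,S_4$ in its denominators, adjoining it genuinely enlarges that ring, and the entire point of the theorem is that this specific element is what must be adjoined. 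Your sketch never identifies the mechanism by which $J$ enters: in Hone and Swart's argument this comes from auxiliary bilinear (higher Somos-type) relations satisfied by the sequence whose coefficients involve $\tilde\beta+\tilde\alpha\tilde T$, on which the induction is actually run. Deferring ``the longest part'' to a Maple check of the first few iterates, to be ``formalized as a polynomial identity'' afterwards, leaves exactly that key identity unstated and unproved.
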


Consider the recurrence
   \begin{equation}\label{somos5poly}
    S_n=\frac{wxyz S_{n-1}S_{n-4}+wxyz S_{n-2}S_{n-3}}{S_{n-5}}.
 \end{equation}
Similar to the proof of Theorem \ref{theorem_somos4poly}, we can also confirm the following result for Somos-5 polynomials.

\begin{theorem}\label{theorem_somos5poly2}
  The sequences produced by \eqref{somos5poly} with the initial values for the following cases all yield polynomials in $x,y,z,w$:
  (1): $1,x,w,1,y$; (2): $x,1,w,y,1$; (3): $1,x,w,y,1$; (4): $1,x,1,w,y$; (5): $x,w,1,y,1$; (6): $x,1,w,1,y$; (7): $1,1,x,w,y$; (8): $x,1,1,w,y$; (9): $x,w,1,1,y$; (10): $x,w,y,1,1$.
\end{theorem}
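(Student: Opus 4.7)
The plan is to follow the same pattern as the proof of Theorem \ref{theorem_somos4poly}, using Theorem \ref{theorem_hone_somos5} (the strong Laurent property of Hone--Swart) in place of Theorem \ref{theorem_hone}. For the recurrence \eqref{somos5poly} we have $\tilde\alpha=\tilde\beta=wxyz$, which are polynomials in $x,y,z,w$, so the first task is to verify that the $n$-independent quantity $\tilde\beta+\tilde\alpha\tilde T$ is also a polynomial in $x,y,z,w$ for each of the ten sets of initial values. This is a direct monomial calculation: since in every case the five initial values lie in $\{1,x,y,w\}$ with each variable appearing at most once, every denominator arising in $\tilde T$ is a squarefree monomial in $\{x,y,w\}$ and is cleared by the $\tilde\alpha=wxyz$ prefactor.

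For cases (7)--(10) there is a shift $k\in\{0,1,2,3\}$ for which $S_k=S_{k+1}=1$. Applying Theorem \ref{theorem_hone_somos5} to the window starting at index $k$ places $S_n$ in $\mathbb{Z}[\tilde\alpha,\tilde\beta,\tilde\beta+\tilde\alpha\tilde T,S_{k+2},S_{k+3},S_{k+4}]$ for all $n\geq k$, with no invertible slot being used nontrivially. The three values $S_{k+2},S_{k+3},S_{k+4}$ are computed from the recursion; the divisions by $S_0$ or $S_1$ that arise are clean because at least one of them equals $1$, and one checks directly that the results are polynomials (for instance, in case (8), $S_5=wyz(y+w)$). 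Hence $S_n$ is a polynomial in $x,y,z,w$ for all $n$, since the earlier terms are themselves monomials.

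Cases (1)--(6) have no two consecutive $1$'s, so a single application of the strong Laurent property is insufficient. Here I would apply Theorem \ref{theorem_hone_somos5} twice, to the window $[0,4]$ and to a second window $[k_2,k_2+4]$, chosen so that $S_0S_1$ and $S_{k_2}S_{k_2+1}$ are coprime monomials in $\mathbb{Z}[x,y,z,w]$. The first application forces the denominator of $S_n\in\mathbb{Z}(x,y,z,w)$ to divide a power of $S_0S_1$; the second forces it to divide a power of $S_{k_2}S_{k_2+1}$; coprimality then forces the denominator to be $1$, so $S_n$ is a polynomial. Suitable second windows are: $k_2=2$ for cases (1), (4) and (5) (with $S_2S_3$ equal to $w$, $w$, $y$ respectively); $k_2=1$ for cases (2) and (6) (with $S_1S_2=w$); and $k_2=3$ for case (3) (with $S_3S_4=y$). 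In each, the relevant monomial is coprime to $S_0S_1$, which equals $x$ for cases (1)--(4) and (6) and $xw$ for case (5).

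The main obstacle is that in order to apply Theorem \ref{theorem_hone_somos5} to the shifted window, the values $S_{k_2+2},S_{k_2+3},S_{k_2+4}$ entering as polynomial generators must already be known to be polynomials (rather than genuine Laurent polynomials) in $x,y,z,w$---otherwise their own denominators would contaminate the denominator of $S_n$ and break the coprimality argument. For each of the six cases this reduces to computing a handful of explicit sequence terms from \eqref{somos5poly} and observing the needed cancellations, exactly as in the proof of Theorem \ref{theorem-somos4polynomial}. This step is mechanical and easily carried out by hand or by Maple, after which the coprimality argument above closes the proof.
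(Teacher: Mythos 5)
Your proposal is correct and follows essentially the same route as the paper: check that $\tilde\beta+\tilde\alpha\tilde T$ is a polynomial, then apply Theorem \ref{theorem_hone_somos5} to two index windows whose leading products $S_kS_{k+1}$ are coprime monomials (the paper details only case (1), with windows at $k=0$ and $k=2$, and declares the rest similar). If anything, you are more careful than the paper in flagging that the shifted window's generators $S_{k_2+2},S_{k_2+3},S_{k_2+4}$ must first be verified to be polynomials --- a step the paper leaves implicit.
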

\begin{proof}
Here we only give a detailed proof for case (1) because the proofs for the other cases are similar.

Noting that $\tilde\alpha=xyzw$ and $\tilde{T}$ is independent of $n$, from the expression of $\tilde{T}$, we conclude that $\tilde\alpha \tilde{T}$ must be polynomials for all the four cases. Thus, from Theorem \ref{theorem_hone_somos5} and $S_0=1$, $S_1=x$, it is obvious that $S_n$ is a Laurent polynomial in $x$ and polynomial in $y,z,w$. If we shift the indices to make $S_0=w,S_1=1$ without changing the value of $\tilde{T}$, then $S_n$ is also a Laurent polynomial in $w$ and polynomial in $x,y,z$. Therefore, $S_n$ must be a polynomial in $x,y,z,w$ and the proof is completed.
\end{proof}

Similar to Theorem \ref{theorem_somos4poly_general}, using  Theorem \ref{theorem_hone_somos5}, we also have the following general result for Somos-5 polynomials. Here the details of proof are omitted.
\begin{theorem}\label{theorem_somos5poly_general}
  Let $\{x_i\}_{i=1}^N$ be a finite set of indeterminates and $\tilde T$ be defined by \eqref{quantity_tildeT}. Suppose $\{S_n\}$ satisfy the recursion system \eqref{somos5} and $\tilde\alpha,\tilde\beta,$ $\tilde\beta+\tilde\alpha \tilde{T}$ are polynomials in all $x_i$. If there exists a positive integer $r$ such that $\gcd(S_0S_1,S_1S_2,$ $\dots, S_rS_{r+1})=1$ and $S_k$ are polynomials in all $x_i$ for $0\le k\le r+4$, then $S_n$ for $n\geq0$ are all polynomials in all $x_i$.
\end{theorem}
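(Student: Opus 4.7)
The plan is to repeat the argument of Theorem \ref{theorem_somos4poly_general} almost verbatim, with Hone and Swart's strong Laurent property for Somos-5 (Theorem \ref{theorem_hone_somos5}) replacing its Somos-4 counterpart. The only structural difference is that two consecutive terms $S_0$ and $S_1$ appear with negative exponents in the Somos-5 version, which dictates the gcd hypothesis $\gcd(S_0S_1, S_1S_2, \dots, S_rS_{r+1}) = 1$ and forces us to require polynomiality of the $r+5$ consecutive terms $S_0, \dots, S_{r+4}$ rather than $r+4$ as in the Somos-4 case.

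I would argue by contradiction. Suppose some $S_m$ fails to be a polynomial in the $x_i$, and choose an irreducible polynomial factor $D$ appearing in the denominator of $S_m$ written in lowest terms. By the polynomiality hypothesis, we must have $m \ge r+5$. For each $\ell \in \{0, 1, \dots, r\}$, the shifted sequence $\{S_n\}_{n \ge \ell}$ is again a Somos-5 sequence whose first five entries $S_\ell, S_{\ell+1}, \dots, S_{\ell+4}$ are all polynomials in the $x_i$ (since $\ell + 4 \le r+4$). Because $\tilde T$ is $n$-independent, the quantity $\tilde\beta + \tilde\alpha \tilde T$ associated to the shifted sequence agrees with the original and is still polynomial. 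Applying Theorem \ref{theorem_hone_somos5} to the shifted sequence at index $m \ge \ell + 5$ gives
\[
S_m \in \mathbb{Z}[\tilde\alpha,\ \tilde\beta,\ \tilde\beta + \tilde\alpha \tilde T,\ S_\ell^{\pm 1},\ S_{\ell+1}^{\pm 1},\ S_{\ell+2},\ S_{\ell+3},\ S_{\ell+4}].
\]
Every generator on the right other than $S_\ell^{\pm 1}$ and $S_{\ell+1}^{\pm 1}$ lies in $\mathbb{Z}[x_i]$, so the denominator of $S_m$ (in reduced form) divides a power of $S_\ell S_{\ell+1}$; irreducibility of $D$ then forces $D \mid S_\ell S_{\ell+1}$.

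Since this divisibility holds for every $\ell = 0, 1, \dots, r$, we conclude that $D$ divides $\gcd(S_0S_1, S_1S_2, \dots, S_rS_{r+1}) = 1$, a contradiction. Hence every $S_n$ is polynomial in the $x_i$. The only potential obstacle is the bookkeeping around the shift: one must verify that all ``ingredients'' of Theorem \ref{theorem_hone_somos5} remain polynomial after each shift. This is transparent for $\tilde\alpha$ and $\tilde\beta$, which do not depend on $n$ at all, and follows for $\tilde\beta + \tilde\alpha \tilde T$ from the fact that $\tilde T$ is a first integral -- precisely the content of the $n$-independence built into the definition \eqref{quantity_tildeT}.
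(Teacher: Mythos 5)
Your proposal is correct and follows essentially the same route the paper takes: the paper omits the details but explicitly states that the proof mirrors that of Theorem \ref{theorem_somos4poly_general}, i.e., a contradiction argument applying Theorem \ref{theorem_hone_somos5} to each shifted sequence $\{S_n\}_{n\ge\ell}$ for $\ell=0,\dots,r$ to force any irreducible denominator factor $D$ to divide $\gcd(S_0S_1,\dots,S_rS_{r+1})=1$. Your additional remarks on the $n$-independence of $\tilde T$ and the bookkeeping of the shift are exactly the right points to check.
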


\section{Conclusion and discussions}

We have derived determinant solutions to three discrete integrable systems and confirm that the three discrete systems exhibit the Laurent property.
The simplest case is the extend $A_1$ $Q$-system. In particular, we prove that every Fibonacci number can be expressed as a Hankel determinant.
The general solution to the Somos-4 recurrence is given in terms of Hankel determinants. Observing that the Somos-5 recurrence can be viewed as a specified B\"{a}cklund transformation of the Somos-4 recurrence, we also derive the Hankel determinants solution to the Somos-5 recurrence.

As for the Laurent property of the Somos-4,5 recurrences, it can also be proved by using the combinatorial models as reduction cases of the octahedron recurrence \cite{speyer2007perfect} (also appearing as $T$-system in \cite{di2010solution,di2013t}). The $A_1$ $Q$-system, as a special case of the $Q$-system, was also done in \cite{di2010qsystem}. But the combinatorial method seems somewhat awkward to work with in practice. Our determinant formula consists of elements with convolution recurrence, and it is a closed form, which appears clearer and more intuitive. (Although the determinant solution to T system with arbitrary initial values was given in \cite{di2010solution,di2013t}, it is not a easy thing to construct the solutions to Somos-4,5 recurrences as reduction cases.) It should be noted that the combinatorial models imply the positive Laurent property (that is, the nonzero coefficients of the Laurent polynomials are all positive), while our determinant solutions do not seem to imply this property. The positive Laurent property is also an interesting topic. See \cite{di2010solution,di2011discrete,di2009positivity,di2010qsystem,di2010q,di2013t,fomin2000total,fomin2002cluster} etc.


Finally, we shall describe how we discovered the Hankel determinants formulae. We explain it by taking Theorem \ref{theorem_somos4} as an example. Actually, by applying Lemma \ref{lemma_xin} and \ref{theorem_xin}, we can solve the Somos-4 recurrence for Hankel determinant solution, and the expression of the solution is not unique. This is equivalent to solving for  $a_0,b_0,c_0,d_0,f_0$ (noting that $e_n=-1$).

Let us sketch the idea as follows. As before, shift the indices so that $S_{-1}=S_0=1$,$S_1=x,S_2=y,S_3=\alpha y+\beta x^2$. It is easy to see that $a_0=x$ and $a_1=y/x^2$.
Similar to the proof to Theorem \ref{theorem_somos4}, in order to make the Somos-4 recurrence hold, we need to equivalently make another formula  ($T(n)=0$) hold, which can be obtained by applying Lemma \ref{theorem_xin}. Then we need the induction step $T(n)-T(n-1)=0$ and $T(2)=0$, from which there needs to make sure that
\begin{equation}\label{f_1recurrence}
  f_1=-\frac{a_0f_0+a_0c_0-k\sqrt\alpha}{a_0}, \quad k=\pm1
\end{equation}
and
\begin{eqnarray}
   &&a_0  ^{2}  a_1   ^{2}+2\,  a_0  ^{2}a_1 f_0 f_1 +3\, a_0  ^{2}a_1 c_0f_1+2\, a_0   ^{2}a_1  f_1  ^{2}+ a_0  ^{2}a_1 cf_0 +a_0  ^{2}a_1 {c_0}^{2}- a_0  ^{3} f_0  ^{2}\nonumber\\
   &&-2\, a_0  ^{3}f_0 c_0-2\, a_0   ^{3}f_0 f_1 - a_0  ^{3}{c_0}^{2}-2\, a_0  ^{3}c_0f_1 - a_0  ^{3} f_1   ^{2}-\beta-\alpha\,a_1=0,\label{T(0)=0}
\end{eqnarray}
respectively.

Additionally, from the recurrence relations, we have
\begin{equation}\label{a_1recurrence}
  a_{1}=-\frac{a_0^3e_0+a_0^2d_0-a_0b_0c_0+b_0^2}{a_0^2}.
\end{equation}

Noting that $f_1=-\frac{b_0}{a_0}$ and $a_0,e_0,a_1$ is known, we can express $b_0,c_0,d_0$ in terms of a free number $f_0$ by using \eqref{f_1recurrence}, \eqref{T(0)=0} and \eqref{a_1recurrence}.
Thus, we have the following result.
\begin{theorem}
  If we let $S_{-1}=1$,$S_0=1$,$S_1=x,S_2=y,S_3=\alpha y+\beta x^2$ by shifting the indices, then the Somos-4 recurrence \eqref{somos4} has the following explicit Hankel determinant representation:
\[
S_n=\det(p_{i+j})_{0\leq i,j\leq n-1}.
\]
Here sequence $\{p_n\}$ has the generating function
\begin{equation*}
  P(t)=\frac{a_{0}+b_{0}t}{1+c_{0}t+d_{0}t^2+t^2(e_{0}+f_{0}t)P(t)},
\end{equation*}
with
 \begin{eqnarray*}
   &&a_{0}=x,\\
   &&b_{0}=-{\frac {yf_0  \sqrt{\alpha}x+\beta\,{x}^{2}-{y}^{2}-\alpha\,y+{x}^{3}\alpha+k\alpha\,y}{y \sqrt{\alpha}}},\\
   &&c_{0}=-{\frac {\beta\,{x}^{2}-{y}^{2}-\alpha\,y+{x}^{3}\alpha+2\,yf_0  \sqrt{\alpha}x}{xy \sqrt{\alpha}}},\\
   &&d_{0}=-\frac {1}{{{x}^{2}y \sqrt{\alpha}}}[-\beta\,{x}^{3}f_0 +\beta\,{x}^{2}k \sqrt{\alpha}+\alpha\,yf_0 x-{x}^{4}\alpha\,f_0 +{x}^{3}{\alpha}^{3/2}k+{y}^{2}f_0 x\\
   &&\quad \quad -{\alpha}^{3/2}yk-y  f_0^{2} \sqrt{\alpha}{x}^{2}+{k}^{2}{\alpha}^{3/2}y-{x}^{3}y \sqrt{\alpha}+{y}^{2} \sqrt{\alpha}-{y}^{2}k \sqrt{\alpha}],\\
   &&e_{0}=-1
 \end{eqnarray*}
and $k=\pm1$ and arbitrary number $f_0$.
  \end{theorem}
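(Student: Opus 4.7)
The plan is to follow the same argument that established Theorem \ref{theorem_somos4}, now retaining $f_0$ and the sign $k\in\{+1,-1\}$ as free parameters rather than specializing them. By Lemma \ref{lemma_xin}, the Hankel determinant $\det(H_n(P))$ equals $a_0^{n}a_1^{n-1}\cdots a_{n-1}$, so the identity $S_n=\det(H_n(P))$ will follow once the derived sequence $\{a_n\}$ is shown to satisfy the transformed Somos-4 recursion \eqref{a_somos4}, namely $a_n a_{n-1} a_{n-2}=\alpha+\beta/a_{n-1}$.

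First I would fix $a_0=x$ and $e_0=-1$, which guarantees $e_n\equiv -1$ and matches $S_1=x$. The constraint $\det(H_2(P))=S_2=y$ then forces $a_1=y/x^2$, which via \eqref{a_1recurrence} imposes one algebraic relation on $(b_0,c_0,d_0)$. Next, Lemma \ref{theorem_xin} yields an explicit expression for $a_{n+2}$ in terms of $a_{n+1}$, $a_n$, and the initial data $a_0, a_1, c_0, f_0, f_1$. Substituting this expression into \eqref{a_somos4} produces a polynomial $T(n)$ in $a_{n-1}$, $a_{n-2}$, and the initial data whose vanishing for all $n\geq 2$ is equivalent to the Somos-4 recursion. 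Exactly as in the proof of Theorem \ref{theorem_somos4}, the difference $T(n)-T(n-1)$ vanishes automatically once $a_{n-1}$ is re-eliminated by the $a_{n+2}$-formula, so it suffices to impose (i) the base case $T(2)=0$, which is \eqref{T(0)=0}, and (ii) a compatibility condition that matches the constant appearing in the Lemma \ref{theorem_xin} identity with the Somos-4 coefficient $\alpha$; this compatibility condition takes the form \eqref{f_1recurrence}, the discrete choice $k=\pm 1$ coming from the square root that appears.

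We therefore have four equations — \eqref{a_1recurrence}, the relation $f_1=-b_0/a_0$ from Lemma \ref{lemma_xin}, the compatibility condition \eqref{f_1recurrence}, and the base case \eqref{T(0)=0} — involving the five unknowns $b_0,c_0,d_0,f_0,f_1$. Treating $f_0$ as a free parameter, one first uses $f_1=-b_0/a_0$ and \eqref{f_1recurrence} to express $b_0$ linearly in $c_0$, then solves \eqref{T(0)=0} (now quadratic in $c_0$ but reducing to a linear equation after the square root is absorbed into $k$) for $c_0$, and finally reads off $d_0$ from \eqref{a_1recurrence}. The resulting closed forms are exactly the expressions given in the statement, and Theorem \ref{theorem_somos4} is recovered in the specialization $f_0=0$, $k=1$.

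The main obstacle is bookkeeping rather than conceptual: after substitution the system becomes essentially linear, but its coefficients are rational expressions in $x,y,\alpha,\beta,f_0,k$ containing $\sqrt{\alpha}$, and verifying that the solutions collapse to the advertised closed forms requires careful symbolic simplification — a task best delegated to Maple or an analogous computer-algebra system, as the authors themselves note.
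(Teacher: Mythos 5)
Your proposal follows essentially the same route as the paper's own sketch: fix $a_0=x$, $e_0=-1$, force $a_1=y/x^2$ from $S_2=y$, obtain the induction-step condition \eqref{f_1recurrence} (with $k=\pm1$ absorbing the square root of $\alpha$) and the base case \eqref{T(0)=0} via Lemma \ref{theorem_xin}, and then solve \eqref{a_1recurrence}, $f_1=-b_0/a_0$, \eqref{f_1recurrence} and \eqref{T(0)=0} for $b_0,c_0,d_0$ with $f_0$ left free. The argument is correct and matches the paper's approach, including the concluding reliance on symbolic computation for the final simplification.
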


{\bf Remark:} The special case of $k=1$ and $f_0=0$ reduces to Theorem \ref{theorem_somos4}, which seems to be the most concise version.

Additionally, it is noted that one can find artificial interpretations in terms of weighted sums of some combinatorial objects for the general Somos-4,5 and the extended $A_1$ $Q$-system by employing the well-known Gessel-Viennot-Lindstr\"{o}m theorem (cf. e.g.
\cite{bressoud1999proofs,gessel1985binomial,sulanke2008hankel}) for determinants.

\section*{Acknowledgments}
We are grateful to the anonymous referees for helpful suggestions. This work was partially supported by the National
Natural Science Foundation of China (Grant Nos. 11331008, 11371251) and the
knowledge innovation program of LSEC and the Institute of
Computational Mathematics, AMSS, CAS. The third named author was partially supported by the National
Natural Science Foundation of China (Grant No. 11171231).


\end{document}